\documentclass{article}
\usepackage{ijcai26}

\usepackage{times}
\usepackage{soul}
\usepackage{url}
\usepackage[hidelinks]{hyperref}
\usepackage[utf8]{inputenc}
\usepackage[small]{caption}
\usepackage{graphicx}
\usepackage{amsmath}
\usepackage{amssymb}
\usepackage{amsfonts}
\usepackage{amsthm}
\usepackage{booktabs}
\usepackage{algorithm}
\usepackage{algorithmic}
\usepackage[switch]{lineno}
\usepackage{enumitem}
\usepackage{bm}
\usepackage[table]{xcolor}
\usepackage{subcaption}
\usepackage{multirow} 
\usepackage{mdframed}
\newtheorem{assumption}{Assumption}
\newtheorem{prop}{Proposition}
\newtheorem{lemma}{Lemma}

\newtheorem{theorem}{Theorem}

\title{Uncovering Residual Factors in Financial Time Series via PCA and MTP$_2$-constrained Gaussian Graphical Models}

\author{
Koshi Watanabe$^1$
\and
Ryota Ozaki$^2$\and
Kentaro Imajo$^2$\And
Masanori Hirano$^2$\\
\affiliations
$^1$Hokkaido University\\
$^2$Preferred Networks Inc.
\emails
koshi.watanabe.504@gmail.com,
\{ryota55ozaki, imos\}@preferred.jp,  research@mhirano.jp
}

\begin{document}

\maketitle

\begin{abstract}
    Financial time series are commonly decomposed into market factors, which capture shared price movements across assets, and residual factors, which reflect asset-specific deviations.
    To hedge the market-wide risks, such as the COVID-19 shock, trading strategies that exploit residual factors have been shown to be effective.
    However, financial time series often exhibit near-singular eigenstructures, which hinder the stable and accurate estimation of residual factors.
    This paper proposes a method for extracting residual factors from financial time series that hierarchically applies principal component analysis (PCA) and Gaussian graphical model (GGM).
    Our hierarchical approach balances stable estimation with elimination of factors that PCA alone cannot fully remove, enabling efficient extraction of residual factors.  
    We use multivariate totally positive of order 2 (MTP$_2$)-constrained GGM to capture the predominance of positive correlations in financial data.
    Our analysis proves that the resulting residual factors exhibit stronger orthogonality than those obtained with PCA alone.
    Across multiple experiments with varying test periods and training set lengths, the proposed method consistently achieved superior orthogonality of the residual factors. 
    Backtests on the S\&P~500 and TOPIX~500 constituents further indicate improved trading performance, including higher Sharpe ratios.
\end{abstract}

\section{Introduction}\label{Sec:Introduction}
Financial markets exhibit complex price dynamics driven by both market-wide and asset-specific risks.
Figure~\ref{Fig:Return-Series} depicts the cumulative stock returns of Google and Apple. 
The two series move together, rising and falling at similar times, yet each exhibit distinct asset-specific deviations.
Classical factor models, such as CAPM \cite{sharpe1964capital} and the Fama--French model \cite{fama1992cross}, formalize this by decomposing returns into \emph{common factors} that affect many assets and \emph{residual factors} unique to each asset. 
\par
Identifying residual factors has attracted considerable attention in both research and practice. 
For example, recent studies show that residual factors enable construction of portfolios hedged against common market factors \cite{blitz2011residual,imajo2021deep}. 
These findings indicate that accurate extraction of residual factors have substantial potential for trading and market analysis.
\par
Because residual factors are independent across assets, they can be identified using multivariate analysis.
Two widely used approaches for recovering independent factors are \textbf{whitening} \cite{kessy2018optimal} and \textbf{independent component analysis (ICA)} \cite{hyvarinen2001independent,hyvarinen2023nonlinear}. Whitening equalizes eigenvalues of the data covariance, and ICA typically follows by rotating the whitened data to achieve statistical independence.
These methods have proven effective in domains such as computer vision and speech recognition \cite{coates2011analysis,huang2018decorrelated,choi2005blind}.
However, financial data streams often yield nearly singular covariance matrices, and equalizing small eigenvalues linked to asset-specific fluctuations can cause numerical instability.
\begin{figure}[t]
    \centering
    \includegraphics[width=80mm]{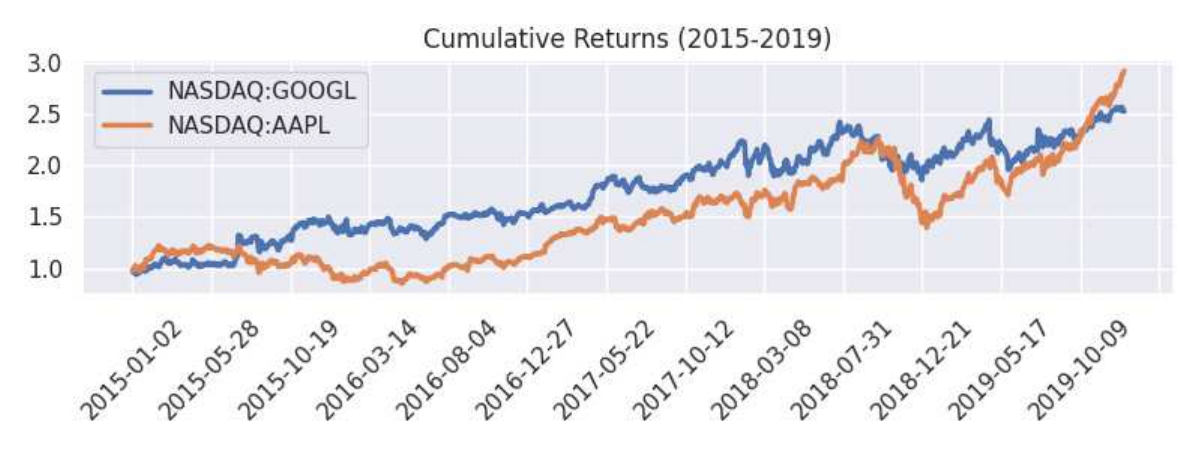}
    \caption{Cumulative stock returns for Google (NASDAQ: GOOGL) and Apple (NASDAQ: AAPL), 2015--2019.}
    \label{Fig:Return-Series}
\end{figure}
\par
In contrast, researchers have proposed methods that extract residual factors by applying \textbf{principal component analysis (PCA)} to remove dominant market components, grounded in arbitrage pricing theory \cite{ross1976arbitrage,chamberlain1983arbitrage}.
These methods are robust to near-singular covariance structures because they remove the eigenvectors associated with the large eigenvalues.
PCA-based methods have been employed in numerous studies \cite{onatski2010determining,ahn2013eigenvalue,bai2016econometric} and are now recognized as a fundamental approach for extracting market and residual factors.
\par
However, PCA-based methods can exhibit reduced accuracy when estimating weak factors \cite{chudik2011weak,onatski2012asymptotics,bai2016econometric}.
Studies applying random matrix theory to financial time series \cite{laloux1999noise} find that within-sector correlations are substantially weaker than market-wide correlations \cite{plerou2002random,pan2007collective}.
PCA-based methods often rely on information criteria to determine the number of components, by assessing the relative explanatory power of each factor.
As a result, factors associated with relatively weak intra-sector correlations may be underestimated, leading to deterioration in identification of residual factors.
\par
To address this issue, we adopt Gaussian graphical models (GGMs) as an effective approach \cite{yuan2007model,friedman2008sparse}.
GGMs are widely used to infer sparse patterns of partial correlations and are well suited for modeling residual dependencies in return data after controlling for dominant factors using PCA.
Furthermore, previous studies document positive cross-asset correlations \cite{sharpe1964capital,chamberlain1983arbitrage,fama1992cross}, which provide a useful inductive bias when inferring the remaining correlation structure.
\par
This paper presents a novel method for identifying residual factors in financial time series: it first applies PCA, then eliminates factors estimated by GGM constrained by multivariate total positivity of order 2 (MTP$_2$) \cite{lauritzen2019maximum,ying2023adaptive}.
The MTP$_2$ constraint has been recently studied for its tractability and wide applications \cite{slawski2015estimation,lauritzen2019maximum,agrawal2022covariance,ying2023adaptive} and restricts the class of covariance between assets such that all partial correlations are nonnegative.
This restriction is consistent with empirical characteristics of financial data streams and is validated in covariance estimation \cite{agrawal2022covariance}.
\par
Our analysis is demonstrated through a theoretical comparison with the PCA method and experiments using historical data from two stock markets, S\&P~500 and TOPIX~500 constituents. 
First, under the MTP$_2$ assumption, we show that residual factors obtained by the proposed method exhibit stronger orthogonality than those obtained via PCA. 
This finding is independent of the method used to determine the number of factors \cite{bai2002determining,onatski2010determining,ahn2013eigenvalue}, suggesting that the approach can extend to other PCA-based factor estimation studies.
In our experiments, we perform backtests on more than ten years of historical data.
We compare the orthogonality of the residual returns obtained by the proposed method against those by the PCA or the other whitening methods, confirming that the theoretical result also holds empirically. 
We further conduct contrarian trading using the residual factors, showing that the proposed method achieves superior performance, including higher Sharpe ratios. 
All experiments are conducted comprehensively by rolling through multiple evaluation periods.
\par
The \textbf{main contributions} of this study are as follows:
\begin{itemize}[itemsep=0pt, parsep=1pt]
\item We propose a novel method for extracting residual factors from financial time series, enabling accurate and robust estimation.
\item Under assumptions appropriate for financial time series, the method produces residual factors that are more orthogonal across assets than those from PCA.
\item Experiments using historical data on S\&P~500 and TOPIX~500 constituents demonstrate greater orthogonality and stronger performance of reversal strategies across multiple time periods.
\end{itemize}

\section{Preliminaries}\label{Sec:Preliminaries}
\subsection{Problem Setting}\label{SubSec:Setting}
Let $p_{n,t}$ denote the price of asset $n$ observed at time $t$.  
The corresponding (arithmetic) return is defined as $x_{n,t} = \frac{p_{n,t+1}}{p_{n,t}} - 1$. 
Let $\mathbf{x}_{n} = [x_{n,1}, x_{n,2}, \ldots, x_{n,T}]^{\top} \in \mathbb{R}^{T}$ denote the vector of returns for asset $n$ over $T$ consecutive periods.
This study considers the return matrix $\mathbf{X} = [\mathbf{x}_1,\, \mathbf{x}_2,\, \ldots,\, \mathbf{x}_N]^{\top} \in \mathbb{R}^{N \times T}$, which comprises return sequences of $N$ assets, and addresses the problem of extracting residual factors from 
the cross-asset return vector at time $t$, denoted by $\mathbf{x}_{:,t} \in \mathbb{R}^{N}$. 
\par
In datasets with irregular variation, including financial time series, nonlinear transformations are prone to overfitting and can lead to unstable out-of-sample performance.
Therefore, this study aims to estimate a transformation matrix $\mathbf{W} \in \mathbb{R}^{N \times N}$ that derives a residual matrix $\mathbf{R}=\mathbf{W}\mathbf{X}$.  
Our objective is to minimize the absolute Pearson correlation between assets $i$ and $j$, given by $\frac{|\mathbf{r}_i^{\top}\mathbf{r}_j|}{\|\mathbf{r}_i\|_2\|\mathbf{r}_j\|_2}$.

\subsection{Related Works}\label{SubSec:Related-Works}
The most straightforward approach is \emph{whitening} \cite{hyvarinen2001independent,kessy2018optimal,hyvarinen2023nonlinear}, 
which entails estimating a transformation matrix $\mathbf{W}$ such that $\mathbf{R}\mathbf{R}^{\mathsf{T}} =\mathbf{I}$ in the training data.
While this approach shows promise, the computation of the inverse of the eigenvalues to equalize variances 
is a primary source of numerical instability in financial time series with
substantial variation in the eigenvalue spectrum.
One potential solution introduces shrinkage estimation \cite{ledoit2004well,chen2010shrinkage}, a framework for stabilizing eigenvalues. 
While this approach improves numerical stability, it may simultaneously degrade the estimation accuracy of residual factors.
\par
Based on PCA, residual factors are obtained by removing contribution of the largest principal components from the return series \cite{bai2002determining,fan2013large,imajo2021deep}. 
This approach typically uses an information criterion to determine the number of statistically significant factors, with remaining components treated as residual factors. 
Several extensions have been proposed \cite{bai2016econometric}, such as introducing new information criteria \cite{onatski2010determining,ahn2013eigenvalue} or modeling time variation \cite{ahn2013panel,bai2021dynamic}. 
Although our proposed method builds upon the most fundamental approach \cite{bai2002determining}, the analysis extends naturally to these alternative settings.

\section{PCA with an Information Criterion} \label{Sec:PCA-IC}
To illustrate PCA-based factor removal, we first define the compact singular value decomposition (SVD) of the return matrix as $\mathbf{X}=\mathbf{U}\bm{\Sigma}\mathbf{V}^{\top}$, where $\mathbf{U}\in\mathbb{R}^{N \times M}$ and $\mathbf{V}\in\mathbb{R}^{T \times M}$ are matrices satisfying the orthonormal conditions $\mathbf{U}^{\top}\mathbf{U}=\mathbf{V}^{\top}\mathbf{V}=\mathbf{I}$, with $M=\min(N, T)$.  
The matrix $\bm{\Sigma}=\mathrm{diag}(\sigma_{1},\sigma_{2},\ldots,\sigma_{M})$ is diagonal, and its entries correspond to singular values arranged in descending order, i.e., $\sigma_{1}>\sigma_{2}>\cdots >\sigma_{M}>0$.
An equivalent representation is $\mathbf{X}=\sum_{i=1}^{M}\sigma_{i}\mathbf{u}_{:,i}\mathbf{v}_{:,i}^{\top}$, where $\mathbf{u}_{:,i} \in \mathbb{R}^{N}$ and $\mathbf{v}_{:,i} \in \mathbb{R}^{T}$ denote the $i$-th column vectors of $\mathbf{U}$ and $\mathbf{V}$, respectively. 
The extraction of the factor corresponding to the $k$-th singular value can be formulated using a linear projection operator.  
Specifically, since the orthonormal condition 
$\mathbf{u}_{:,i}^{\top}\mathbf{u}_{:,j} =
\begin{cases}
    1 & (i = j), \\
    0 & (i \neq j)
\end{cases}$ is satisfied, premultiplying the return matrix by $\mathbf{u}_{:,k}\mathbf{u}_{:,k}^{\top} \in \mathbb{R}^{N \times N}$ yields  
\begin{align}
    &\mathbf{u}_{:,k}\mathbf{u}_{:,k}^{\top}\mathbf{X}   
    = \sum_{i=1}^{M} \sigma_{i} \mathbf{u}_{:,k} \mathbf{u}_{:,k}^{\top} \mathbf{u}_{:,i} \mathbf{v}_{:,i}^{\top}   
    = \sigma_{k} \mathbf{u}_{:,k} \mathbf{v}_{:,k}^{\top}.
    \label{Eq:Factor-Extraction}
\end{align}
\par
Building on the above equation, this step extracts components corresponding to the first $k$ singular values.  
By defining $\mathbf{U}_k = [\mathbf{u}_{:,1},\, \mathbf{u}_{:,2},\, \ldots,\, \mathbf{u}_{:,k}] \in \mathbb{R}^{N \times k}$,
which contains the first $k$ column vectors of the left singular vector matrix,  
we construct a linear projection that removes the corresponding principal factors as $\mathbf{W}_{\text{PCA}} = \mathbf{I} - \mathbf{U}_k \mathbf{U}_k^\top$.
\par
To determine the number of components $k$, the method employs an information criterion.  
Specifically, $k$ is selected to maximize  
\begin{align}
    IC_{k} = \log \sum_{i=k+1}^{M} \sigma_{i}^{2} + k \frac{\log(M)}{M}.  \label{Eq:Information-Criteria}
\end{align}
For purely idiosyncratic fluctuations associated with small singular values, we have $\sum \sigma_{i}^{2} \approx 0$, implying that the first term in Eq. \eqref{Eq:Information-Criteria} takes an extremely small value.  
Consequently, maximizing Eq.~\eqref{Eq:Information-Criteria} enables selection of singular vectors corresponding to residual factors, while mitigating numerical instability and reducing overfitting risk caused by small eigenvalues.
\par
\begin{figure}[t]
  \centering
  \begin{subfigure}[t]{0.21\textwidth}
    \includegraphics[width=\linewidth]{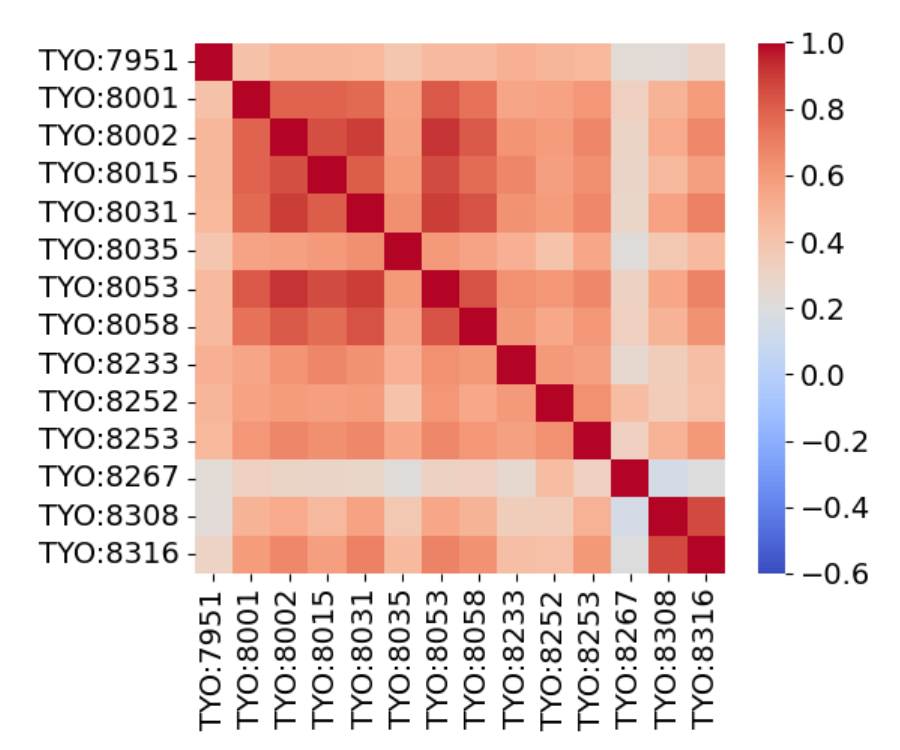}
    \caption{Raw returns}
  \end{subfigure}
  \begin{subfigure}[t]{0.21\textwidth}
    \centering
    \includegraphics[width=\linewidth]{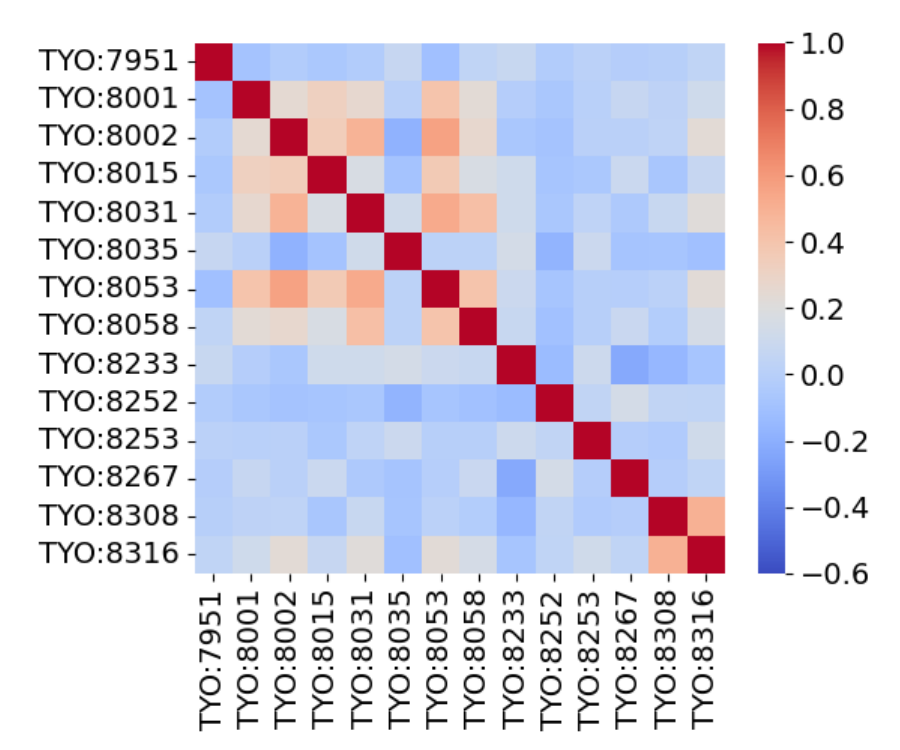}
    \caption{PCA residual factors}
  \end{subfigure}
  \caption{Correlation matrices between assets of TOPIX~500 (2024).}
  \label{Fig:PCA-IC-Results}
\end{figure}
However, this method has the limitation that it cannot remove relatively weak factors.  
Figure~\ref{Fig:PCA-IC-Results} depicts correlation matrices of (a) raw returns and (b) the PCA residual factors obtained by maximizing Eq.~\eqref{Eq:Information-Criteria} for TOPIX~500 constituents.
These results indicate that, while the PCA residual factors successfully eliminate market-wide factors, certain partial correlations remain (e.g., among \texttt{TYO:8001}, \texttt{TYO:8002}, \texttt{TYO:8015}, and \texttt{TYO:8031}) and these assets belong to the same industry sector (e.g., \texttt{TYO:80**} denotes the trading company sector).
This finding illustrates that the information criterion-based approach does not eliminate relatively weak factors, such as sector-specific factors, and there remains room for improvement in identifying fully independent residual factors.

\section{Proposed Method: MTP$_2$ GGM-based Factor Removal} \label{Section:MTP2-GGM-Factor-Removal}
\subsection{Optimization Problem} \label{SubSec:Optimization}
The proposed method employs a Gaussian graphical model (GGM) to remove common factors that are not removed by PCA.  
Let $\mathbf{Z} = \mathbf{W}_{\mathrm{PCA}} \mathbf{X} \in \mathbb{R}^{N \times T}$ denote the return matrix after PCA-based factor removal.
GGM assumes that the cross-asset return vectors $\mathbf{z}_{:,t}$ are generated from a normal distribution $\mathcal{N}(\mathbf{0}, \bm{\Lambda}^{-1})$, and our objective is to estimate the precision matrix $\bm{\Lambda}$. 
Additional constraints can be imposed as inductive bias for GGMs.
Since financial time series frequently exhibit positive correlations, as shown in Figure~\ref{Fig:PCA-IC-Results}, the method imposes the constraint that the precision matrix $\bm{\Lambda}$ belongs to the set of $N \times N$ symmetric \textit{M}-matrices \cite{plemmons1977m} $\mathcal{M}^N$:
\begin{equation}
    \begin{aligned}
        \bm{\Lambda} 
        &= \max_{\bm{\Lambda}}
           \left\{ \log|\bm{\Lambda}| - \frac{1}{T} \mathrm{tr}\left( \bm{\Lambda} \mathbf{Z} \mathbf{H} \mathbf{Z}^{\top} \right) \right\}, \\
        & \quad \quad \quad \quad \text{s.t.} \quad \bm{\Lambda} \in \mathcal{M}^{N}.
    \end{aligned}
    \label{Eq:MTP2-Problem}
\end{equation}
where $\mathbf{H} = \mathbf{I} - \frac{1}{T} \mathbf{1} \mathbf{1}^{\top}$ is the centering matrix, and $\mathcal{M}^{N}$ denotes the set of symmetric \textit{M}-matrices, defined as positive semi-definite matrices with non-positive off-diagonal entries.
\par
We examine implications of the \textit{M}-matrix constraint.  
Non-positivity of off-diagonal elements of $\bm{\Lambda}$ ensures that the \emph{partial correlation coefficient} $\rho_{i,j} = -\frac{\Lambda_{i,j}}{\sqrt{\Lambda_{i,i} \Lambda_{j,j}}}$ between any two assets is positive.  
In a hypothetical scenario where prices of certain stocks are known at a specific time, we assume that \emph{the positively correlated movements of other stocks are predictable, whereas movements lacking positive correlation are difficult to anticipate}.
This captures the common market hypothesis \cite{fama1992cross} that asset prices move in tandem within specific sectors or groups of stocks. 
Consequently, imposing the \textit{M}-matrix constraint facilitates extraction of common factors by assuming rational dynamics among assets.

\subsection{Optimization Algorithm}\label{SubSec:Algorithm}
The solution to Eq.~\eqref{Eq:MTP2-Problem} cannot be obtained in closed form and is typically solved using gradient-based methods~\cite{ying2023adaptive}. 
We employ the projected gradient method to solve the problem.  
This algorithm updates $\bm{\Lambda}$ iteratively in the gradient direction then projects it back onto the feasible region $\mathcal{M}^{N}$ (\textbf{Algorithm~\ref{Alg:Projected-GA}}).
The projection operator $\Pi_{\mathcal{M}^{N}}(\cdot)$ is the shortest-distance projection that minimizes the Frobenius norm between the updated matrix and the feasible set:
\begin{equation}
    \Pi_{\mathcal{M}^{N}}(\bm{\Lambda})=\arg \min_{\mathbf{Y}\in\mathcal{M}^{N}}||\bm{\Lambda}-\mathbf{Y}||_{\mathrm{F}}^{2}. \label{Eq:Prox-$M$-matrix}
\end{equation}
Note that the projection in Eq.~\eqref{Eq:Prox-$M$-matrix} is equivalent to the proximal operator when the indicator function 
${1}_{\mathcal{M}^{N}}=\begin{cases}
    0&(\mathbf{\Lambda} \in \mathcal{M}^{N}) \\ \infty & (\mathrm{otherwise})
\end{cases}$ 
is used as a penalty term in problem~\eqref{Eq:MTP2-Problem}. 
This guarantees convergence of \textbf{Algorithm~\ref{Alg:Projected-GA}} to at least a saddle point.
\par
However, Eq.~\eqref{Eq:Prox-$M$-matrix} also cannot be solved analytically in closed form.  
To address this issue, we employ Dykstra's projection method~\cite{boyle1986method}, which computes the projection onto the intersection of multiple convex sets (\textbf{Algorithm \ref{Alg:Prox-$M$-matrix}}).
The set $\mathcal{M}^{N}$ can be expressed as the intersection of the $N$-dimensional cone of positive semi-definite matrices $\mathcal{S}_{+}^{N}$ and the set of $N$-dimensional matrices with non-positive off-diagonal elements $\mathcal{Z}^{N}$, as
\begin{equation}
    \mathcal{M}^{N} = \mathcal{S}_{+}^{N} \cap \mathcal{Z}^{N}.
    \label{Eq:Set-$M$-matrix}
\end{equation}
Projections onto each convex set are expressed as follows:
\begin{align}
    \Pi_{\mathcal{S}_{+}^{N}}(\bm{\Lambda})&=\arg \min_{\mathbf{Y}\in\mathcal{S}_{+}^{N}}||\bm{\Lambda}-\mathbf{Y}||_{\mathrm{F}}^{2}, \label{Eq:Prox-PD-Cone} \\
    \Pi_{\mathcal{Z}^{N}}(\bm{\Lambda})&=\arg \min_{\mathbf{Y}\in\mathcal{Z}^{N}}||\bm{\Lambda}-\mathbf{Y}||_{\mathrm{F}}^{2}. \label{Eq:Prox-Z-Matrix}
\end{align}
While computing $\Pi_{\mathcal{M}^{N}}(\cdot)$ is computationally challenging, its constituent projections $\Pi_{\mathcal{S}_{++}^{N}}(\cdot)$ and $\Pi_{\mathcal{Z}^{N}}(\cdot)$ admit closed-form expressions.
\begin{theorem}
    Let $\bm{\Lambda}$ be an $N \times N$ square matrix and let its eigenvalue decomposition be $\bm{\Lambda} = \mathbf{Q}\mathbf{\Omega}\mathbf{Q}^{\top}$.  
    Then, the projections $\Pi_{\mathcal{S}_{++}^{N}}(\cdot)$ and $\Pi_{\mathcal{Z}^{N}}(\cdot)$ are given by
    \begin{align}
        \Pi_{\mathcal{S}_{++}^{N}}(\bm{\Lambda}) &= \mathbf{Q}\mathbf{\Omega}_{+}\mathbf{Q}^{\top}, \label{Eq:Closed-Form-Prox-PD-Cone} \\
        \Pi_{\mathcal{Z}^{N}}(\bm{\Lambda}) &=
        \begin{cases}
            \min(\Lambda_{i,j},0) & (i \neq j), \\
            \Lambda_{i,j} & (i = j),
        \end{cases}
        \label{Eq:Closed-Form-Prox-Z-Matrix}
    \end{align}
    where $\mathbf{\Omega}_{+} = \mathrm{diag}(\max(\omega_{1},0),\ldots,\max(\omega_{N},0))$ denotes the diagonal matrix containing positive eigenvalues.
    \label{Thrm:S-Z-Projection}
\end{theorem}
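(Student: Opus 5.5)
The natural route is to treat the two projections separately, since each is a Frobenius-norm nearest-point problem onto a closed convex set, and to exploit the fact that the Frobenius norm is invariant under orthogonal similarity (for $\mathcal{S}_{+}^{N}$) and decomposes entrywise (for $\mathcal{Z}^{N}$). Throughout I take $\bm{\Lambda}$ to be symmetric, as it is in Algorithm~\ref{Alg:Projected-GA}, where the iterates and gradients are symmetric. This is what makes the decomposition $\bm{\Lambda}=\mathbf{Q}\mathbf{\Omega}\mathbf{Q}^{\top}$ with orthogonal $\mathbf{Q}$ and real $\mathbf{\Omega}$ available. For a nonsymmetric square matrix I would first replace $\bm{\Lambda}$ by its symmetric part, since the antisymmetric part is Frobenius-orthogonal to every symmetric $\mathbf{Y}$. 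I also read $\mathcal{S}_{++}^{N}$ in the statement as the closed cone $\mathcal{S}_{+}^{N}$ of Eq.~\eqref{Eq:Prox-PD-Cone}, because the open cone admits no nearest point in general.

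For Eq.~\eqref{Eq:Closed-Form-Prox-Z-Matrix}, I would write $\|\bm{\Lambda}-\mathbf{Y}\|_{\mathrm{F}}^{2}=\sum_{i,j}(\Lambda_{i,j}-Y_{i,j})^{2}$. The constraint set $\mathcal{Z}^{N}$ is a product of independent scalar constraints: $Y_{i,j}\le 0$ for $i\neq j$, and $Y_{i,i}$ unrestricted. The minimization therefore splits into $N^{2}$ one-dimensional problems. The diagonal terms are minimized by $Y_{i,i}=\Lambda_{i,i}$. Each off-diagonal term is the projection of a real number onto $(-\infty,0]$, which is $\min(\Lambda_{i,j},0)$. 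The resulting matrix is symmetric whenever $\bm{\Lambda}$ is, so it lies in the intended set.

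For Eq.~\eqref{Eq:Closed-Form-Prox-PD-Cone}, I would substitute $\mathbf{Y}=\mathbf{Q}\mathbf{B}\mathbf{Q}^{\top}$. This is a bijection of $\mathcal{S}_{+}^{N}$ onto itself that preserves the Frobenius norm, so the problem becomes minimizing $\|\mathbf{\Omega}-\mathbf{B}\|_{\mathrm{F}}^{2}$ over $\mathbf{B}\succeq 0$. Expanding gives
\[
\|\mathbf{\Omega}-\mathbf{B}\|_{\mathrm{F}}^{2}=\sum_{i}(\omega_{i}-B_{i,i})^{2}+\sum_{i\neq j}B_{i,j}^{2}\ \ge\ \sum_{i}(\omega_{i}-B_{i,i})^{2}.
\]
Positive semidefiniteness forces $B_{i,i}\ge 0$, so each remaining term is at least $(\omega_{i}-\max(\omega_{i},0))^{2}=\min(\omega_{i},0)^{2}$. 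This lower bound is attained by $\mathbf{B}=\mathbf{\Omega}_{+}$, which is diagonal with nonnegative entries and hence feasible. Transforming back gives $\mathbf{Q}\mathbf{\Omega}_{+}\mathbf{Q}^{\top}$. Uniqueness follows from strict convexity of the squared norm on a convex set, or as an alternative one can verify the variational inequality $\langle \bm{\Lambda}-\mathbf{P},\mathbf{Y}-\mathbf{P}\rangle\le 0$ for all $\mathbf{Y}\succeq0$.

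The only genuinely delicate step is the positive semidefinite case, specifically the reduction to diagonal $\mathbf{B}$. The lower-bound argument above avoids needing any eigenvalue-interlacing or von Neumann trace inequality, so I expect that step to close cleanly. The rest is bookkeeping about symmetry and about the closed versus open cone.
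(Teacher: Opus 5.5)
Your proposal is correct and follows the same route as the paper's proof: you use orthogonal invariance of the Frobenius norm to reduce the semidefinite case to the diagonal matrix $\mathbf{\Omega}$ and then clip its eigenvalues, and you use an entrywise decomposition for $\mathcal{Z}^{N}$. Your lower bound using $B_{i,i}\ge 0$ supplies the justification the paper omits when it simply asserts that the off-diagonal entries of $\mathbf{Q}^{\top}\mathbf{Y}\mathbf{Q}$ must vanish at the minimizer, and you are right to read $\mathcal{S}_{++}^{N}$ as the closed cone $\mathcal{S}_{+}^{N}$ with $\bm{\Lambda}$ symmetric.
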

\begin{proof}
    See Appendix \ref{Apx:Proof-S-Z-Projection}.
\end{proof}
We solve Eq.~\eqref{Eq:MTP2-Problem} by iteratively applying these algorithms.
\begin{algorithm}[t]
    \caption{Projected gradient ascent for Problem \eqref{Eq:MTP2-Problem}.}
    \begin{algorithmic}[1]
        \STATE $\bm{\Lambda}^{(0)} = \Pi_{\mathcal{M}^{N}}(\mathbf{Z}\mathbf{H}\mathbf{Z}^{\top})$, learning rate $\alpha$, number of epochs \texttt{O\_ITER}, $k \gets 0$
        \WHILE{$k < \texttt{O\_ITER}$}
            \STATE $\bm{\Lambda}'  \gets \bm{\Lambda}^{(k)} + \alpha \nabla f(\bm{\Lambda}^{(k)})$
            \STATE $\bm{\Lambda}^{(k+1)} \gets \Pi_{\mathcal{M}^{N}}(\bm{\Lambda}')$
            \STATE $k \gets k + 1$
        \ENDWHILE
        \RETURN $\bm{\Lambda}^{(k)}$
    \end{algorithmic}
    \label{Alg:Projected-GA}
\end{algorithm}
\begin{algorithm}[t]
    \caption{Dykstra's projection method for $\Pi_{\mathcal{M}^{N}}(\tilde{\bm{\Lambda}})$}
    \begin{algorithmic}[1]
        \STATE $\bm{\Lambda}'^{(0)} \gets \bm{\Lambda}'$, $\mathbf{E}_{S}^{(0)} \gets \mathbf{O}$, $\mathbf{E}_{Z}^{(0)} \gets \mathbf{O}$, number of iterations \texttt{I\_ITER}, $k \gets 0$
        \WHILE{$k < \texttt{I\_ITER}$}
            \STATE $\mathbf{A}^{(k)} \gets \Pi_{\mathcal{S}_{+}^{N}}\left( \bm{\Lambda}'^{(k)} + \mathbf{E}^{(k)}_{S} \right)$ 
            \STATE $\mathbf{E}^{(k+1)}_{S} \gets \bm{\Lambda}'^{(k)} + \mathbf{E}^{(k)}_{S} - \mathbf{A}^{(k)}$
            \STATE $\bm{\Lambda}'^{(k+1)} \gets \Pi_{\mathcal{Z}^{N}}\left( \mathbf{A}^{(k)} + \mathbf{E}_{Z}^{(k)} \right)$
            \STATE $\mathbf{E}_{Z}^{(k+1)} \gets \mathbf{A}^{(k)} + \mathbf{E}_{Z}^{(k)} - \bm{\Lambda}'^{(k+1)}$
            \STATE $k \gets k + 1$
        \ENDWHILE
        \RETURN $\bm{\Lambda}'^{(k)}$
    \end{algorithmic}
    \label{Alg:Prox-$M$-matrix}
\end{algorithm}
\subsection{Projection for Factor Removal}\label{SubSec:MTP2-Factor-Removal}
We obtain the optimal $\bm{\Lambda}$ using \textbf{Algorithm \ref{Alg:Projected-GA}} and \textbf{Algorithm \ref{Alg:Prox-$M$-matrix}}.
The method then transforms the precision matrix back into information in the return space.
We employ the conditional mean of the return to achieve this, given by
\begin{align}
\mathbb{E}[z_{i,t} \mid \mathbf{z}_{-i,t}] &= -\sum_{j \neq i} \frac{\Lambda_{i,j}}{\Lambda_{i,i}} z_{j,t}, \label{Eq:GGM-Factor}
\end{align}
where $\mathbf{z}_{-i,t}$ denotes the vector obtained by deleting the $i$-th element from $\mathbf{z}_{:,t}$.
Equation \eqref{Eq:GGM-Factor} is equivalent to predicting the factor of asset $i$ at time $t$ based on factors of all other assets.
The method treats this predicted return as the common factor associated with asset $i$, since it can be estimated from other correlated returns. 
Therefore, the remaining residual factors are given as follows:
\begin{align}
    z_{i,t}-\mathbb{E}[z_{i,t}|\mathbf{z}_{-i,t}]&=\sum_{j=1}^{N}\frac{\Lambda_{i,j}}{\Lambda_{i,i}}x_{j,t}. \label{Eq:GGM-Residuals}
\end{align}
Let the diagonal elements of $\bm{\Lambda}$ as $\mathbf{D}$.
Then, the transformation by Eq. \eqref{Eq:GGM-Residuals} can be written as $\mathbf{W}_{\mathrm{GGM}}=\mathbf{D}^{-1}\bm{\Lambda}$.
\par
From the above, the method derives residual factors by hierarchically adopting PCA and MTP$_2$-based removal, and the resulting linear projector is given by:
\begin{align}
    \mathbf{W}&=\mathbf{W}_{\mathrm{GGM}}\mathbf{W}_{\mathrm{PCA}} \notag\\
    &=\mathbf{D}^{-1}\bm{\Lambda}(\mathbf{I}-\mathbf{U}_{k}\mathbf{U}^{\top}_{k}).
\end{align}
The residual factors of our method is also given as:
\begin{align}
    \mathbf{R}=\mathbf{W}_{\mathrm{GGM}}\mathbf{W}_{\mathrm{PCA}}\mathbf{X}=\mathbf{W}_{\mathrm{GGM}}\mathbf{Z}
\end{align}

\begin{table*}[t]
    \centering
    \caption{The $\ell_{1}$ and $\ell_{2}$ mean of cross-asset correlations with different training/testing periods. Boldface indicates the best, and underlining indicates the second-best.}
    \begin{tabular}{llccccccc}
        \toprule
        Asset & train start (3 years) & 2015/01 & 2016/01 & 2017/01 & 2018/01 & 2019/01 & 2020/01 & 2021/01\\
        & valid start (1 year) & 2018/01 & 2019/01 & 2020/01 & 2021/01 & 2022/01 & 2023/01 & 2024/01 \\
        \midrule
        S\&P~500 & \underline{\textit{$\ell_1$ mean}} & & & & & & &\\
        & ICA & 0.2085 & 0.1995 & 0.2686 & 0.1917 & 0.2001 & 0.2038 & 0.1937 \\
        & Whitening & 0.0763 & 0.0780 & 0.1345 & 0.0771 & 0.0811 & 0.0783 & 0.0776 \\
        & Shr.~Whitening & 0.0709 & 0.0729 & \underline{0.1301} & 0.0731 & 0.0772 & 0.0742 & 0.0727 \\
        & PCA & \underline{0.0670} & \underline{0.0704} & 0.1384 & \underline{0.0678} & \underline{0.0747} & \underline{0.0696} & \underline{0.0683} \\
        & \textbf{PCA+GGM} & \cellcolor{blue!15}\textbf{0.0639} & \cellcolor{blue!15}\textbf{0.0658} & \cellcolor{blue!15}\textbf{0.1270} & \cellcolor{blue!15}\textbf{0.0649} & \cellcolor{blue!15}\textbf{0.0694} & \cellcolor{blue!15}\textbf{0.0663} & \cellcolor{blue!15}\textbf{0.0650} \\
        \midrule
        & \underline{\textit{$\ell_2$ mean}} & & & & & & \\
        & ICA & 0.0727 & 0.0657 & 0.1101 & 0.0591 & 0.0636 & 0.0661 & 0.0616 \\
        & Whitening & 0.0092 & 0.0097 & 0.0295 & 0.0093 & 0.0104 & 0.0097 & 0.0096 \\
        & Shr.~Whitening & 0.0080 & 0.0085 & \underline{0.0278} & 0.0084 & 0.0095 & 0.0088 & 0.0085 \\
        & PCA & \underline{0.0076} & \underline{0.0084} & 0.0312 & \underline{0.0076} & \underline{0.0093} & \underline{0.0082} & \underline{0.0079} \\
        & \textbf{PCA+GGM} & \cellcolor{blue!15}\textbf{0.0068} & \cellcolor{blue!15}\textbf{0.0072} & \cellcolor{blue!15}\textbf{0.0262} & \cellcolor{blue!15}\textbf{0.0068} & \cellcolor{blue!15}\textbf{0.0078} & \cellcolor{blue!15}\textbf{0.0073} & \cellcolor{blue!15}\textbf{0.0071} \\
        \midrule
        TOPIX~500 & \underline{\textit{$\ell_1$ mean}} & & & & & & \\
        & ICA & 0.1874 & 0.1767 & 0.2318 & 0.1912 & 0.1865 & 0.1805 & 0.1829 \\
        & Whitening & 0.0809 & 0.0810 & 0.1177 & 0.0792 & 0.0836 & 0.0824 & 0.0868 \\
        & Shr.~Whitening & 0.0748 & 0.0747 & 0.1126 & 0.0721 & 0.0772 & 0.0752 & 0.0806 \\
        & PCA & \underline{0.0651} & \underline{0.0655} & \underline{0.1102} & \underline{0.0630} & \underline{0.0703} & \underline{0.0654} & \underline{0.0745} \\
        & \textbf{PCA+GGM} & \cellcolor{blue!15}\textbf{0.0645} & \cellcolor{blue!15}\textbf{0.0652} & \cellcolor{blue!15}\textbf{0.1056} & \cellcolor{blue!15}\textbf{0.0620} & \cellcolor{blue!15}\textbf{0.0684} & \cellcolor{blue!15}\textbf{0.0651} & \cellcolor{blue!15}\textbf{0.0731} \\
        \midrule
         & \underline{\textit{$\ell_2$ mean}} & & & & & & & \\
        & ICA & 0.0557 & 0.0501 & 0.0822 & 0.0575 & 0.0554 & 0.0524 & 0.0531 \\
        & Whitening & 0.0103 & 0.0103 & 0.0220 & 0.0098 & 0.0110 & 0.0107 & 0.0119 \\
        & Shr.~Whitening & 0.0088 & 0.0088 & 0.0204 & 0.0082 & 0.0094 & 0.0090 & 0.0103 \\
        & PCA & \underline{0.0070} & \underline{0.0071} & \underline{0.0199} & \underline{0.0066} & \underline{0.0083} & \underline{0.0072} & \underline{0.0092} \\
        & \textbf{PCA+GGM} & \cellcolor{blue!15}\textbf{0.0068} & \cellcolor{blue!15}\textbf{0.0069} & \cellcolor{blue!15}\textbf{0.0180} & \cellcolor{blue!15}\textbf{0.0062} & \cellcolor{blue!15}\textbf{0.0076} & \cellcolor{blue!15}\textbf{0.0071} & \cellcolor{blue!15}\textbf{0.0088} \\
        \bottomrule
    \end{tabular}
    \label{Table:Results-Different-Period}
\end{table*}

\subsection{Theoretical Analysis}
\label{Sec:Analysis}
This subsection examines theoretical properties of residual factors obtained from the proposed method.
We denote by $\mathrm{cov}(\cdot)$ the sample covariance matrix of a given variable, and by $\mathrm{cor}(\cdot)$ its sample correlation matrix.  
The sample covariance between two variables is denoted by $\mathrm{cov}(\cdot,\cdot)$.  
We begin by stating the single assumption required to derive the results.
\begin{assumption}\label{Assum:MTP2}
    We assume that the covariance matrix of $\mathbf{Z}$ satisfies
    \begin{align*}
        \mathrm{cov}(\mathbf{Z}) = \bm{\Lambda}^{-1}.
    \end{align*}
    In other words, the covariance matrix estimated from Problem~\eqref{Eq:MTP2-Problem} is equal to the covariance matrix of $\mathbf{Z}$.
\end{assumption}
Note that stationary points of Problem~\eqref{Eq:MTP2-Problem} without constraints satisfy $\bm{\Lambda}^{-1} = \frac{1}{T}\mathbf{Z} \mathbf{H} \mathbf{Z}^\top = \mathrm{cov}(\mathbf{Z})$, and the MTP$_2$ condition is particularly natural for financial time series. 
Hence, the assumption stated above is well justified.
\par
Using this assumption, we obtain the following property regarding covariance between $\mathbf{R}$ and $\mathbf{Z}$.
\begin{prop}
    The covariance between residual factors $\mathbf{R}$ and PCA residual factors $\mathbf{Z}$ after PCA satisfies
    \begin{align}
        \mathrm{cov}(\mathbf{R}, \mathbf{Z}) &= \mathbf{D}^{-1}, \label{Eq:CovRZ}
    \end{align}
    where $\mathbf{D}$ is a diagonal matrix of $\mathbf{\Lambda}$. 
    \label{Prop:Cov-R}
\end{prop}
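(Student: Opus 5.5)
The plan is to compute the sample cross-covariance directly from the linear relation $\mathbf{R}=\mathbf{W}_{\mathrm{GGM}}\mathbf{Z}=\mathbf{D}^{-1}\bm{\Lambda}\mathbf{Z}$ and then invoke Assumption~\ref{Assum:MTP2}. The same centering matrix $\mathbf{H}$ from Problem~\eqref{Eq:MTP2-Problem} is used throughout, so that
\begin{align*}
\mathrm{cov}(\mathbf{R},\mathbf{Z})=\frac{1}{T}\mathbf{R}\mathbf{H}\mathbf{Z}^{\top},\qquad
\mathrm{cov}(\mathbf{Z})=\frac{1}{T}\mathbf{Z}\mathbf{H}\mathbf{Z}^{\top}.
\end{align*}
The key observation is that sample cross-covariance is bilinear. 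For a fixed (data-independent, once estimated) matrix $\mathbf{A}$, it therefore satisfies $\mathrm{cov}(\mathbf{A}\mathbf{Z},\mathbf{Z})=\mathbf{A}\,\mathrm{cov}(\mathbf{Z})$. This holds because the centering $\mathbf{H}$ acts on the time index (from the right), while $\mathbf{A}$ acts on the asset index (from the left), so the two commute.

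The steps are as follows.
\begin{enumerate}
\item Substitute $\mathbf{R}=\mathbf{D}^{-1}\bm{\Lambda}\mathbf{Z}$ to get $\mathrm{cov}(\mathbf{R},\mathbf{Z})=\frac{1}{T}\mathbf{D}^{-1}\bm{\Lambda}\mathbf{Z}\mathbf{H}\mathbf{Z}^{\top}=\mathbf{D}^{-1}\bm{\Lambda}\,\mathrm{cov}(\mathbf{Z})$.
\item Apply Assumption~\ref{Assum:MTP2}, $\mathrm{cov}(\mathbf{Z})=\bm{\Lambda}^{-1}$, which gives $\mathbf{D}^{-1}\bm{\Lambda}\bm{\Lambda}^{-1}=\mathbf{D}^{-1}$. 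This is \eqref{Eq:CovRZ}.
\end{enumerate}

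Along the way I will note two well-definedness points. First, $\bm{\Lambda}$ must be invertible for Assumption~\ref{Assum:MTP2} to make sense. Second, $\mathbf{D}$ must be invertible, i.e.\ $\Lambda_{i,i}>0$. Both follow from the assumption: $\bm{\Lambda}\in\mathcal{M}^N$ is positive semidefinite, and the assumption requires it to be invertible, hence positive definite, hence every diagonal entry is positive. I will also state that $\mathbf{W}_{\mathrm{GGM}}$ is treated as fixed when computing the sample covariance, as the paper does.

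There is a small inconsistency to handle. Eq.~\eqref{Eq:GGM-Residuals} is written with $x_{j,t}$, but the residual is defined as $\mathbf{R}=\mathbf{W}_{\mathrm{GGM}}\mathbf{Z}$, so I will work with $\mathbf{Z}$ throughout. Beyond that, the only point needing care is making the convention for the sample cross-covariance explicit, namely using the same $\frac{1}{T}$ and $\mathbf{H}$ as in Problem~\eqref{Eq:MTP2-Problem}. Once the convention matches, the identity is a one-line computation. I do not expect a genuine obstacle.
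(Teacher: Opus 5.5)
Your proposal is correct and is essentially the paper's proof: substitute $\mathbf{R}=\mathbf{D}^{-1}\bm{\Lambda}\mathbf{Z}$ to get $\mathrm{cov}(\mathbf{R},\mathbf{Z})=\mathbf{D}^{-1}\bm{\Lambda}\,\mathrm{cov}(\mathbf{Z})$, then apply Assumption~\ref{Assum:MTP2}. Your remarks on the shared $\frac{1}{T}\mathbf{H}$ convention and on why $\bm{\Lambda}$ and $\mathbf{D}$ are invertible make explicit what the paper leaves implicit.
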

\begin{proof}
    See Appendix \ref{Apx:Proof-Prop-Cov-R}.
\end{proof}
\textbf{Proposition \ref{Prop:Cov-R}} implies that residual factors $\mathbf{R}$ are orthogonal to the PCA residual factors.
$\mathbf{R}$ represents the component of $\mathbf{Z}$ that is not explained by other assets and is specific to each asset.
\par
We next present the following inequality regarding correlation between PCA residual factors and those obtained from the proposed method.
\begin{prop}
    Correlation matrices of PCA residual factors $\mathbf{Z}$ and residual factors of proposed method $\mathbf{R}$ satisfy
    \begin{align}
        |\mathrm{cor}(\mathbf{R})| &\leq |\mathrm{cor}(\mathbf{Z})|. \label{Eq:CorR-vs-CorZ}
    \end{align}
    \label{Prop:Cor-R-is-smaller-than-Cor-Z}
\end{prop}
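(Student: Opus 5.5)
The plan is to compute both correlation matrices explicitly in terms of $\bm{\Lambda}$ using Assumption~\ref{Assum:MTP2}, and then compare them entrywise, reading $|\cdot|$ in \eqref{Eq:CorR-vs-CorZ} as the entrywise absolute value. Since $\mathbf{R}=\mathbf{W}_{\mathrm{GGM}}\mathbf{Z}=\mathbf{D}^{-1}\bm{\Lambda}\mathbf{Z}$ is a linear map of $\mathbf{Z}$, and $\mathbf{D}^{-1}\bm{\Lambda}$ is symmetric up to the diagonal scaling, we get
\begin{align*}
\mathrm{cov}(\mathbf{R}) = \mathbf{D}^{-1}\bm{\Lambda}\,\mathrm{cov}(\mathbf{Z})\,\bm{\Lambda}\mathbf{D}^{-1} = \mathbf{D}^{-1}\bm{\Lambda}\mathbf{D}^{-1}.
\end{align*}
This is the same computation that underlies Proposition~\ref{Prop:Cov-R}. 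Normalizing gives $\mathrm{cor}(\mathbf{R})_{i,j} = \Lambda_{i,j}/\sqrt{\Lambda_{i,i}\Lambda_{j,j}} = -\rho_{i,j}$, the negative of the partial correlation. On the other side, $\mathrm{cor}(\mathbf{Z})_{i,j} = \Sigma_{i,j}/\sqrt{\Sigma_{i,i}\Sigma_{j,j}}$ with $\bm{\Sigma}=\bm{\Lambda}^{-1}$. On the diagonal both entries equal $1$, so the claim reduces to showing $|\text{partial correlation}| \le |\text{marginal correlation}|$ for each pair $i\neq j$ when $\bm{\Lambda}$ is a (nonsingular, by the assumption) symmetric M-matrix.

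For a fixed pair, set $A=\{i,j\}$ and let $B$ be its complement. By the block inversion formula, $\bm{\Sigma}_{AA} = \mathbf{S}^{-1}$, where
\begin{align*}
\mathbf{S} = \bm{\Lambda}_{AA} - \bm{\Lambda}_{AB}\bm{\Lambda}_{BB}^{-1}\bm{\Lambda}_{BA}
\end{align*}
is the Schur complement. Three facts about $\mathbf{S}$ follow:
\begin{itemize}
\item $\bm{\Lambda}_{BB}$ is a principal submatrix of a nonsingular M-matrix, hence itself a nonsingular M-matrix, so $\bm{\Lambda}_{BB}^{-1}\ge 0$ entrywise.
\item $\bm{\Lambda}_{AB}\le 0$, so the correction term $\mathbf{C}=\bm{\Lambda}_{AB}\bm{\Lambda}_{BB}^{-1}\bm{\Lambda}_{BA}$ is entrywise nonnegative.
\item $\mathbf{S}$ is positive definite, being the Schur complement of a positive definite matrix.
\end{itemize}
Together these give $0< S_{i,i}\le\Lambda_{i,i}$, $0<S_{j,j}\le\Lambda_{j,j}$, and $S_{i,j} = \Lambda_{i,j}-C_{i,j}\le \Lambda_{i,j}\le 0$, so $|S_{i,j}|\ge|\Lambda_{i,j}|$.

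Inverting the $2\times 2$ matrix $\mathbf{S}$, the correlation read off from $\bm{\Sigma}_{AA}=\mathbf{S}^{-1}$ is $-S_{i,j}/\sqrt{S_{i,i}S_{j,j}}$, because the inverse of a $2\times2$ matrix keeps the diagonal product and flips the sign of the off-diagonal entry, up to a common positive determinant factor. Hence
\begin{align*}
|\mathrm{cor}(\mathbf{Z})_{i,j}| = \frac{|S_{i,j}|}{\sqrt{S_{i,i}S_{j,j}}} \ge \frac{|\Lambda_{i,j}|}{\sqrt{\Lambda_{i,i}\Lambda_{j,j}}} = |\mathrm{cor}(\mathbf{R})_{i,j}|,
\end{align*}
since the numerator increases and the denominator decreases when passing from $\bm{\Lambda}$ to $\mathbf{S}$.

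The main obstacle is the sign bookkeeping in the Schur complement step. The argument needs both that $\bm{\Lambda}_{BB}^{-1}$ is entrywise nonnegative, a standard M-matrix fact I would cite from \cite{plemmons1977m}, and that the Schur complement is strictly positive definite. The latter requires $\bm{\Lambda}$ to be invertible, which Assumption~\ref{Assum:MTP2} implicitly guarantees. Everything else is direct substitution.
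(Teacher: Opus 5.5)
Your proof is correct. It follows the paper's overall skeleton: identify $\mathrm{cor}(\mathbf{R})_{i,j}$ with minus the partial correlation of $\mathbf{Z}$, fix a pair $A=\{i,j\}$ (the paper's block $1$), and compare $\bm{\Lambda}_{AA}$ with $\bm{\Sigma}_{AA}^{-1}$ entrywise. You prove the central inequality by a different and more economical route.

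The paper stays in the covariance domain:
\begin{itemize}
\item it writes the partial correlation as $C(\bm{\Sigma}_{11}-\bm{\Sigma}_{12}\bm{\Sigma}_{22}^{-1}\bm{\Sigma}_{21})$;
\item it uses Woodbury's identity to show $(\bm{\Sigma}_{11}-\bm{\Sigma}_{12}\bm{\Sigma}_{22}^{-1}\bm{\Sigma}_{21})^{-1}\ge\bm{\Sigma}_{11}^{-1}$, which needs a separate cofactor-expansion lemma to get $\bm{\Sigma}_{21}\bm{\Sigma}_{11}^{-1}\ge 0$;
\item it then cites the Karlin--Rinott result that, for inverse M-matrices, an entrywise ordering of the inverses reverses the ordering of the normalized matrices.
\end{itemize}
Since $(\bm{\Sigma}_{11}-\bm{\Sigma}_{12}\bm{\Sigma}_{22}^{-1}\bm{\Sigma}_{21})^{-1}=\bm{\Lambda}_{AA}$ and $\bm{\Sigma}_{11}^{-1}=\mathbf{S}$, the paper's key inequality is exactly your $\mathbf{S}\le\bm{\Lambda}_{AA}$.

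You get that inequality in one line from the Schur complement of $\bm{\Lambda}$, using only $\bm{\Lambda}_{AB}\le 0$ and $\bm{\Lambda}_{BB}^{-1}\ge 0$. You then replace the Karlin--Rinott citation with an explicit $2\times 2$ inversion, which is all a pairwise statement needs. Your viewpoint also gives the paper's auxiliary sign fact directly: $\bm{\Lambda}\bm{\Sigma}=\mathbf{I}$ yields $\bm{\Sigma}_{BA}\bm{\Sigma}_{AA}^{-1}=-\bm{\Lambda}_{BB}^{-1}\bm{\Lambda}_{BA}\ge 0$.

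What the paper's formulation buys is a block-level statement phrased purely in terms of inverse M-matrices: the conditional correlations of any sub-vector given the rest are dominated by its marginal correlations. The cost is an external theorem and a more delicate cofactor argument. Yours buys a short, self-contained proof from standard M-matrix facts, and the case $N=2$ ($B=\emptyset$, equality) is covered trivially.
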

\begin{proof}
See Appendix~\ref{Apx:Proof-Cor-R-is-smaller-than-Cor-Z}.
\end{proof}
\textbf{Proposition \ref{Prop:Cor-R-is-smaller-than-Cor-Z}} shows that introducing GGM-based removal after PCA reduces the cross-asset correlations, thereby making residual factors more orthogonal. 
As discussed earlier, when the residual factors are interpreted as fluctuations driven by asset-specific risks, smaller correlations among residual factors are desirable.
Therefore, this property supports the effectiveness of the proposed method. 

\begin{figure*}[t]
  \centering
  \begin{subfigure}[t]{0.47\textwidth}
    \centering
    \includegraphics[width=\linewidth]{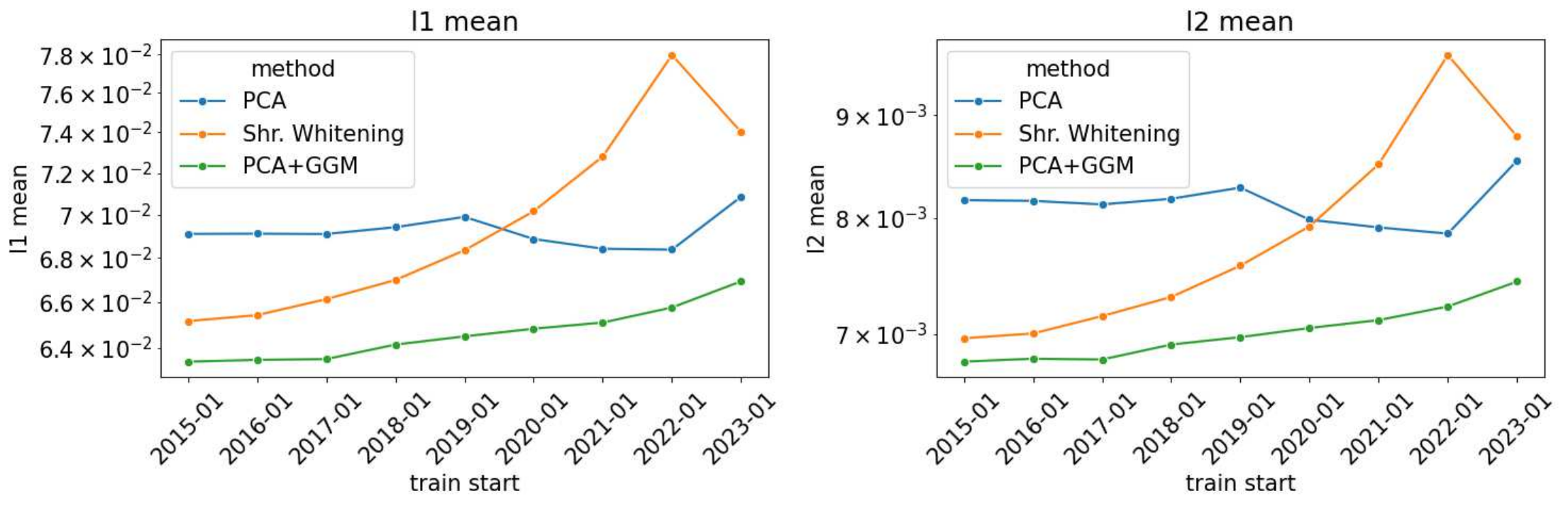}
    \caption{S\&P~500}
  \end{subfigure}
  \hfill
  \begin{subfigure}[t]{0.47\textwidth}
    \centering
    \includegraphics[width=\linewidth]{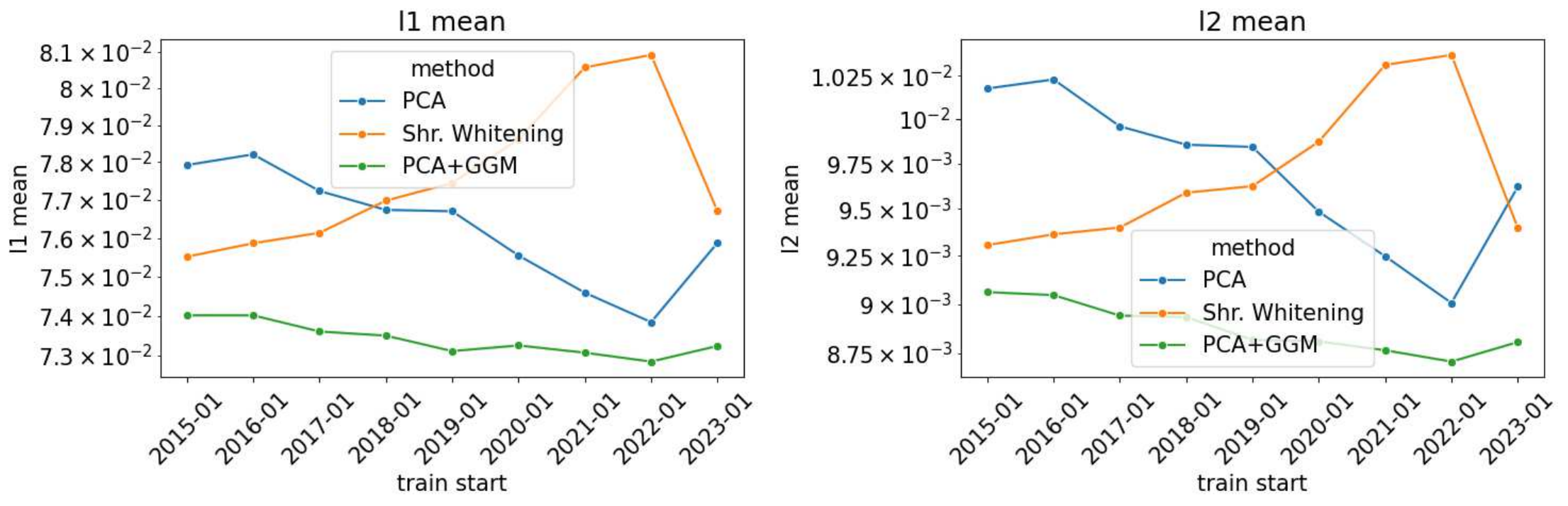}
    \caption{TOPIX~500}
  \end{subfigure}  
  \caption{The $\ell_{1}$ and $\ell_{2}$ means of cross-asset correlations with different training data horizons. The horizontal axis indicates the starting timestamp of the training period. Moving to the right indicates a shorter training period.}
  \label{Fig:Changing-N-Results}
\end{figure*}

\section{Experiments} \label{Sec:Experiment}
\subsection{Experimental Settings} \label{SubSec:Exp-Settings}
We constructed daily return series based on closing prices of constituents of the S\&P~500 (US) and TOPIX~500 (JP).
The resulting return series were split into multiple sets of training and test periods for each experimental setting.
\par
We compared the proposed method with the following methods:
\begin{itemize}
    \item \textbf{ICA}: Independent component analysis (ICA) is a widely used method for orthogonalization that computes a transformation based on kurtosis. In this study, we applied the commonly used Fast ICA algorithm~\cite{hyvarinen2001independent}.
    \item \textbf{Whitening}: Whitening is a method for finding a transformation that orthogonalizes the given data and is considered effective for obtaining mutually independent residual factors. In this study, we applied ZCA whitening~\cite{kessy2018optimal}, which performs orthogonalization in the return space.
    \item \textbf{Shrinkage Whitening (Shr.~Whitening)}: Standard whitening requires taking the inverse of eigenvalues and therefore behaves unstably for singular matrices. To address this, we compared a method that applies the OAS shrinkage estimator~\cite{chen2010shrinkage} to the covariance matrix.
    \item \textbf{PCA}: As a baseline, we used the residual factor extraction method based on the information criterion employed in our proposed approach. Comparison with PCA allows verification of the effectiveness of the newly introduced GGM-based factor.
\end{itemize}

\subsection{Evaluation with Sliding Testing Windows} \label{SubSec:Sliding-Testing-Windows}
This experiment evaluates performance of each method across multiple train–test splits spanning 2015 to 2024.
We defined a three-year training window (e.g., \texttt{2015/01}--\texttt{2017/12}) followed by a one-year test window (e.g., \texttt{2018/01}--\texttt{2018/12}).
The transformation matrix was estimated during the training period and was fixed during the test period.
We evaluated orthogonality of residual factors during the test period by the $\ell_1$ mean of cross-asset correlations, which measures absolute correlation magnitude, and the $\ell_2$ mean, which emphasizes the influence of larger correlations.
All experiments were repeated ten times, and mean results were reported for comparison.
\par
Table~\ref{Table:Results-Different-Period} summarizes results for the S\&P~500 and TOPIX~500, respectively. 
For both datasets, ICA and Whitening which do not address near-singular eigenstructures, consistently produced higher cross-asset correlations. 
In contrast, Shr.~Whitening with the OAS estimator yielded lower cross-asset correlations than standard whitening, confirming that handling near-singularity improves orthogonality for financial time series.
PCA residual factors further reduced correlations, indicating that removing high-variance components improves residual factor quality. 
The proposed method achieved the lowest correlations in every sample period, demonstrating effectiveness across diverse market regimes.

\begin{table*} 
    \centering
    \caption{Performance of the residual factors on the S\&P~500 and TOPIX~500. A higher Sharpe ratio (SR) is better, whereas lower Maximum Drawdown (MDD) and Conditional Value-at-Risk (CVaR) are better. The best values are shown in bold, and the second -best are underlined.}
    \begin{tabular}{l | l | ccc | ccc}
        \toprule
        & & & \textbf{S\&P} & & & \textbf{TPX} & \\
        data split & method & SR $\uparrow$ & MDD $\downarrow$ & CVaR $\downarrow$ & SR $\uparrow$ & MDD $\downarrow$ & CVaR $\downarrow$ \\
        \midrule
        Training & Baseline & 0.4015 & 0.0865 & 0.0066 & 0.0693 & 0.0934 & 0.0055 \\
        \texttt{2012/01-2016/12} & PCA & 0.7601 & 0.0531 & 0.0039 & 0.5242 & 0.0421 & 0.0032 \\
        Test & Shr.~Whitening & \underline{0.8164} & \underline{0.0481} & \underline{0.0039} & \underline{0.5687} & \underline{0.0399} & \underline{0.0030} \\
        \texttt{2017/01-2021/12} & \textbf{PCA+GGM} & \cellcolor{blue!15}\textbf{0.9022} & \cellcolor{blue!15}\textbf{0.0441} & \cellcolor{blue!15}\textbf{0.0034} & \cellcolor{blue!15}\textbf{0.5992} & \cellcolor{blue!15}\textbf{0.0378} & \cellcolor{blue!15}\textbf{0.0030} \\
        \midrule
        Training & Baseline & 0.4396 & 0.0895 & 0.0073 & 0.1184 & 0.0965 & 0.0059 \\
        \texttt{2013/01-2017/12} & PCA & \underline{0.6845} & 0.0619 & 0.0046 & \underline{0.6385} & 0.0422 & 0.0033 \\
        Test & Shr.~Whitening & 0.6455 & \underline{0.0549} & \underline{0.0045} & 0.5652 & \underline{0.0390} & \underline{0.0032} \\
        \texttt{2018/01-2022/12} & \textbf{PCA+GGM} & \cellcolor{blue!15}\textbf{0.7973} & \cellcolor{blue!15}\textbf{0.0522} & \cellcolor{blue!15}\textbf{0.0040} & \cellcolor{blue!15}\textbf{0.7041} & \cellcolor{blue!15}\textbf{0.0384} & \cellcolor{blue!15}\textbf{0.0030} \\
        \midrule
        Training & Baseline & 0.4470 & 0.0758 & 0.0074  & 0.1027 & 0.0819 & 0.0056 \\
        \texttt{2014/01-2018/12} & PCA  & \underline{0.7303} & 0.0582 & 0.0044 & \cellcolor{blue!15}\textbf{0.6549} & 0.0421 & 0.0030 \\
        Test & Shr.~Whitening & 0.6341 & \underline{0.0547} & \underline{0.0042}  & 0.6455 & \underline{0.0405} & \underline{0.0029}\\
        \texttt{2019/01-2023/12} & \textbf{PCA+GGM} & \cellcolor{blue!15}\textbf{0.7943} & \cellcolor{blue!15}\textbf{0.0495} & \cellcolor{blue!15}\textbf{0.0038}  & \underline{0.6489} & \cellcolor{blue!15}\textbf{0.0386} & \cellcolor{blue!15}\textbf{0.0026} \\
        \midrule
        Training & Baseline & 0.5780 & 0.0851 & 0.0075 & 0.2294 & 0.0748 & 0.0057 \\
        \texttt{2015/01-2019/12} & PCA & 0.6712 & 0.0562 & 0.0044 & \cellcolor{blue!15}\textbf{0.6608} & 0.0435 & 0.0031 \\
        Test & Shr.~Whitening & \underline{0.6895} & \underline{0.0503} & \underline{0.0042} & 0.5705 & \underline{0.0398} & \underline{0.0029} \\
        \texttt{2020/01-2024/12} & \textbf{PCA+GGM}  & \cellcolor{blue!15}\textbf{0.6924} & \cellcolor{blue!15}\textbf{0.0474} & \cellcolor{blue!15}\textbf{0.0039} & \underline{0.6021} & \cellcolor{blue!15}\textbf{0.0396} & \cellcolor{blue!15}\textbf{0.0028} \\
        \bottomrule
    \end{tabular}
    \label{Exp:Reversal-Result}
\end{table*} 
\subsection{Evaluation with Different Training Data Horizon} \label{SubSec:Different-Data-Size}
We next examine how predictive accuracy varies with the training window horizon.
We trained on data from \texttt{2015/01}--\texttt{2023/12} and tested from \texttt{2024/01}--\texttt{2024/12}, then progressively reduced the training window by one year at a time.
All experiments were repeated ten times. 
Motivated by performance observed in the previous experiment and to ensure figure readability, we used PCA and Shr.~Whitening as baselines for comparison in the present experiments.
Given that each year contains approximately 250 time series, when using only \texttt{2022/01}--\texttt{2023/12} or \texttt{2023/01}--\texttt{2023/12} as the training period, the sample covariance matrix becomes singular.
\begin{figure}[t]
  \centering
  \begin{subfigure}[t]{0.21\textwidth}
    \centering
    \includegraphics[width=\linewidth]{Figures/Cor-Matrix/PCA-Valid.pdf}
    \caption{PCA}
  \end{subfigure}
  \begin{subfigure}[t]{0.21\textwidth}
    \centering
    \includegraphics[width=\linewidth]{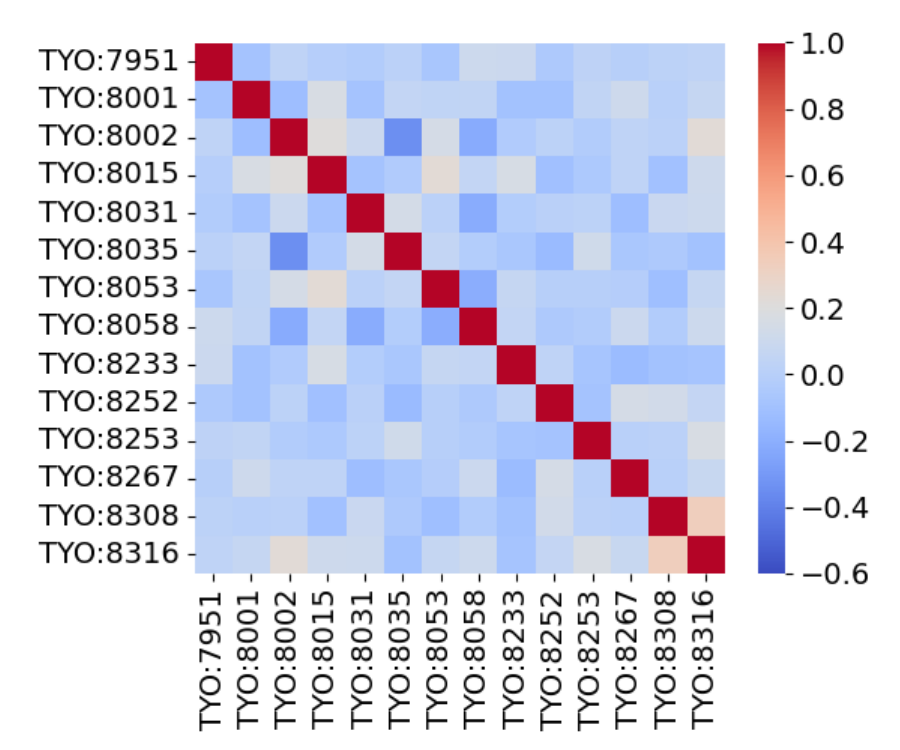}
    \caption{PCA+GGM}
  \end{subfigure}
  \caption{Ablation study of correlation matrices (Training period: \texttt{2020/01}--\texttt{2023/12}, Test period: \texttt{2024/01}--\texttt{2024/12}).}
  \label{Fig:Cor-Matrix}
\end{figure}
\par
Figure \ref{Fig:Changing-N-Results} shows the results.
In all figures, the horizontal axis marks the start of training. 
Positions further to the right indicate a shorter training window and fewer training samples.
Across all validation periods, the proposed method achieved lower values than PCA and Shr.~Whitening on both the $\ell_1$ and $\ell_2$ mean metrics of the correlation matrix.
This indicates that the residual factors maintain superior orthogonality regardless of the length of the training window.
Notably, the performance of Shr.~Whitening, which relies on shrinkage estimation, declined sharply as the training window shortened. 
This pattern implies that shrinkage-induced error increases as the covariance estimate becomes ill-conditioned, that is, near-singular. 
Although Shr.~Whitening outperformed PCA with large training windows, the proposed method consistently surpassed both across all settings, underscoring the effectiveness of the hierarchical GGM-based approach.
\par
Figure~\ref{Fig:Cor-Matrix} reports an ablation study comparing correlation matrices derived from models trained over the \texttt{2020/01}--\texttt{2023/12} window.
For clarity, only a subset of each correlation matrix is shown. 
In the PCA results shown in Figure~\ref{Fig:Cor-Matrix}~(a), a block-structured correlation pattern is evident.
This block corresponds to trading companies with tickers beginning with \texttt{TYO:80**}, indicating that residual factors from PCA did not fully eliminate within-sector correlations.
In contrast, Figure~\ref{Fig:Cor-Matrix}~(b) shows results of the proposed method, where the sector correlations were effectively removed, confirming effectiveness of the approach.
\par
Additional experiments with randomly selected stocks, as well as qualitative results for other methods, are provided in Appendix \ref{Apx:Additional-Results}.

\subsection{Performance Evaluation Using Trading Backtests}
Prior studies suggest that momentum strategies based on residual factors are effective \cite{blitz2011residual} because they remove time-varying common market factors. 
Guided by this premise, we compared methods under the assumption that orthogonal residual factors enhance momentum trading performance. 
In this setting, we implemented a short-term \emph{contrarian} strategy. 
Residual factors at time $t$ served as the signal. 
We then took the opposite position $d$ trading days later and closed it on day $t+d+1$.
Returns were computed from closing prices and rebound speeds likely vary across stocks.
Therefore, to construct a setup that is tradable in real-world transactions, we varied the execution lag $d$ from two to five days and compared average performance across these lags.
At each time step, we enforced a \emph{zero-investment portfolio} by taking equal-weighted long and short positions with matched dollar exposure.
\par
Consistent with the previous setup, we used historical price data for constituents of the S\&P~500 and TOPIX~500 from 2012 to 2024. 
We conducted a rolling-window evaluation with a 5-year training period followed by a 5-year test period, performing four -fold evaluation by shifting the window forward by one year at each step.
We considered PCA and Shr.~Whitening as comparative methods, and used the excess return relative to each index (i.e., S\&P~500 and TOPIX~500) as the \textbf{baseline}. 
We evaluated performance using the Sharpe ratio (\textbf{SR}), Maximum Drawdown (\textbf{MDD}), and Conditional Value at Risk (\textbf{CVaR}).
SR measures return per unit of risk and generally serves as the primary indicator of strategy efficiency. 
MDD is the largest peak-to-trough decline in cumulative returns.
CVaR captures tail risk, and we report the average of the worst 5\% of losses over the trading period.
Details of these metrics are provided in Appendix~\ref{Apx:Evaluation-Metrics}.
\par
Table \ref{Exp:Reversal-Result} reports trading results. 
Relative to the baseline using excess returns over each index, all methods based on residual factors achieved higher SR and lower MDD and CVaR in every sample period, confirming the advantage of residual factor approaches. 
The proposed method attained the highest SR among residual methods in most periods, indicating that its residual factors were particularly effective for contrarian trading. 
For MDD and CVaR, the proposed method consistently outperformed competing approaches across all cases, suggesting that removing common factor exposure yields better risk control. 
These findings align with prior evidence on the efficacy of strategies built on residual-return signals and show that our construction performs robustly across multiple windows.
Overall, these backtests suggest that the proposed factor decomposition framework is viable for practical trading.

\section{Conclusion}
In this paper, we present a method for estimating asset-specific residual factors from financial time series. 
The approach applies PCA followed by GGM that enforces the MTP$_2$ constraint. 
We show theoretically that incorporating MTP$_2$ GGM yields residual factors with more orthogonal cross-asset correlations across assets than those produced by PCA alone. 
Empirical results corroborate this analysis and demonstrate effectiveness of the method.
Backtest results indicate that using residual factors can substantially reduce risk by removing market-wide common factors. 
Future work includes detailed analyses of residual factors, methods that account for temporal dependence in financial time series, and improved scalability.

\clearpage



\bibliographystyle{named}
\bibliography{ijcai26}

\clearpage

\appendix
\section{Proof of Theorem \ref{Thrm:S-Z-Projection}} \label{Apx:Proof-S-Z-Projection}
First, we present the result for $\Pi_{\mathcal{S}_{++}^{N}}(\cdot)$.  
By exploiting invariance of the Frobenius norm under orthogonal transformations, we obtain
\begin{align*}
    \|\bm{\Lambda} - \mathbf{Y}\|_{\mathrm{F}}^{2}
        &= \|\mathbf{\Omega} - \mathbf{Q}^{\top} \mathbf{Y} \mathbf{Q}\|_{\mathrm{F}}^{2}.
\end{align*}
Since $\mathbf{\Omega}$ is a diagonal matrix, the minimizer $\mathbf{Y}$ must satisfy that the off-diagonal entries of $\mathbf{Q}^{\top} \mathbf{Y} \mathbf{Q}$ are zero.  
Furthermore, imposing that $\mathbf{Y}$ is positive definite implies that the minimizer satisfies $\mathbf{Y}$ satisfies $\mathbf{Q}^{\top} \mathbf{Y} \mathbf{Q} = \mathbf{\Omega}_{+}$.  
It follows that $\mathbf{Y} = \mathbf{Q}\mathbf{\Omega}_{+}\mathbf{Q}^{\top}$.

Next, we present the result for $\Pi_{\mathcal{Z}^{N}}(\cdot)$.  
From $\|\bm{\Lambda} - \mathbf{Y}\|_{\mathrm{F}}^{2} = \sum_{i,j} (\Lambda_{i,j} - y_{i,j})^{2}$, it is clear that at the point satisfying Eq.~\eqref{Eq:Prox-Z-Matrix},  
diagonal elements satisfy $y_{i,j} = \Lambda_{i,j}$, while off-diagonal elements satisfy $y_{i,j} = \min(\Lambda_{i,j}, 0)$.  
Therefore, Eq.~\eqref{Eq:Closed-Form-Prox-Z-Matrix} holds.

\section{Proof of Proposition \ref{Prop:Cov-R}}
\label{Apx:Proof-Prop-Cov-R}
By substituting $\mathbf{R}$, we obtain
\begin{align*}
    &\mathrm{cov}(\mathbf{R}, \mathbf{Z}) \\
    &= \mathbf{D}^{-1} \bm{\Lambda} \; \mathrm{cov}(\mathbf{Z}) \\
    &= \mathbf{D}^{-1}.
\end{align*}

\section{Proof of Proposition \ref{Prop:Cor-R-is-smaller-than-Cor-Z}}
\label{Apx:Proof-Cor-R-is-smaller-than-Cor-Z}
We first define terminology used in the proof. 
Let $\mathbf{Z}$ denote the return matrix obtained via PCA transformation $\mathbf{W}_{\mathrm{PCA}}$ as $\mathbf{Z}=\mathbf{W}_{\mathrm{PCA}}\mathbf{X}$, and let $\mathbf{R}=\mathbf{W}\mathbf{X}$ denote the residual factor matrix obtained via the final transformation 
\[
\mathbf{W} = \mathbf{W}_{\mathrm{GGM}}\mathbf{W}_{\mathrm{PCA}} = \mathbf{D}^{-1}\bm{\Lambda}\mathbf{W}_{\mathrm{PCA}},
\]  
where $\bm{\Lambda}$ is the precision matrix estimated by the MTP$_2$ GGM model in Eq. \eqref{Eq:MTP2-Problem}.
We define the inverse of the estimated precision matrix as $\bm{\Sigma} = \bm{\Lambda}^{-1}$, and let $C(\cdot)$ denote the normalization function for a covariance matrix.
Accordingly, we denote the correlation matrix of PCA residual factors as $C(\mathbf{Z}\mathbf{H}\mathbf{Z}^{\top})$ and that of residual factors as $C(\mathbf{R}\mathbf{H}\mathbf{R}^{\top})$ where $\mathbf{H}=\mathbf{I}-\frac{1}{T}\mathbf{1}\mathbf{1}^{\top}$.
We denote by $\mathrm{cov}(\cdot)$ the covariance matrix of a given variable, and $\mathrm{cor}(\cdot)$ its correlation matrix.  
For matrices $\mathbf{A}$ and $\mathbf{B}$, notation $\mathbf{A}\geq 0$ indicates that all elements of $\mathbf{A}$ are nonnegative, while $\mathbf{A}\geq\mathbf{B}$ indicates that $a_{i,j} \geq b_{i,j}$ holds  for all $i,j$.
Finally, let $\mathcal{IM}$ denote the set of inverse \textit{M}-matrices, i.e., matrices whose inverse is an \textit{M}-matrix $\mathcal{M}$.
\par
First, we present the following lemmas holding for \textit{M}-matrices and inverse \textit{M}-matrices:
\begin{lemma}\label{Lemma:IM-Positivity}
    All elements of an inverse \textit{M}-matrix are nonnegative~\cite{markham1972nonnegative}. 
    In other words,
    \begin{align*}
        \bm{\Sigma} \in \mathcal{IM} \ \rightarrow \ \bm{\Sigma} \geq 0.
    \end{align*}
\end{lemma}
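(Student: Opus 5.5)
The plan is to use the standard Neumann-series characterization of nonsingular \textit{M}-matrices. Let $\bm{\Sigma}\in\mathcal{IM}$, so that $\bm{\Lambda}=\bm{\Sigma}^{-1}\in\mathcal{M}^{N}$. Then $\bm{\Lambda}$ is symmetric, positive semi-definite and invertible, hence positive definite. Its eigenvalues therefore satisfy $0<\lambda_{\min}\le\lambda_i\le\lambda_{\max}$, and its off-diagonal entries are non-positive.

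First, I will split $\bm{\Lambda}$ as
\[
\bm{\Lambda}=s\mathbf{I}-\mathbf{B}.
\]
Here I choose the scalar $s$ large enough that $s\ge\max_i\Lambda_{i,i}$ and $s>\lambda_{\max}$.

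\begin{itemize}
\item With this choice, $\mathbf{B}=s\mathbf{I}-\bm{\Lambda}$ is entrywise nonnegative. Its diagonal entries are $s-\Lambda_{i,i}\ge 0$, and its off-diagonal entries are $-\Lambda_{i,j}\ge 0$.
\item $\mathbf{B}$ is symmetric with eigenvalues $s-\lambda_i$, and these lie in $[\,s-\lambda_{\max},\,s-\lambda_{\min}\,]\subset(0,s)$. Since $\mathbf{B}$ is symmetric, its spectral radius is therefore $\rho(\mathbf{B})=s-\lambda_{\min}<s$.
\end{itemize}

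Choosing $s>\lambda_{\max}$ is what rules out large negative eigenvalues of $\mathbf{B}$. This lets me avoid invoking Perron--Frobenius.

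Second, since $\rho(\mathbf{B}/s)<1$, the Neumann series converges and gives
\[
\bm{\Sigma}=\bm{\Lambda}^{-1}=\frac{1}{s}\Bigl(\mathbf{I}-\frac{\mathbf{B}}{s}\Bigr)^{-1}=\frac{1}{s}\sum_{k=0}^{\infty}\Bigl(\frac{\mathbf{B}}{s}\Bigr)^{k}.
\]
Every power of the nonnegative matrix $\mathbf{B}$ is nonnegative, and $s>0$. Each partial sum is therefore entrywise nonnegative. The limit of entrywise nonnegative matrices is entrywise nonnegative, so $\bm{\Sigma}\ge 0$.

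The only step that needs care is justifying $\rho(\mathbf{B})<s$. In particular, the \textit{M}-matrix in $\mathcal{IM}$ must be invertible, so semi-definiteness upgrades to definiteness. Otherwise $\lambda_{\min}=0$ would give $\rho(\mathbf{B})=s$, and the series would diverge. Once that is settled, the remaining steps are routine.
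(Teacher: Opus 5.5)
Your proof is correct, and it differs from the paper mainly in that the paper gives no argument for this lemma. It only cites Markham (1972), where nonnegativity of the inverse of a nonsingular \textit{M}-matrix is treated as a classical fact for general, not necessarily symmetric, \textit{M}-matrices.

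In that general setting the standard proof uses the same splitting $s\mathbf{I}-\mathbf{B}$ and the same Neumann expansion. However, starting from a sign-pattern-plus-positivity definition, one needs Perron--Frobenius theory to show $s>\rho(\mathbf{B})$: the nonnegative $\mathbf{B}$ need not be diagonalizable, and $\rho(\mathbf{B})$ must be shown to be attained at a real eigenvalue.

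Your route is tailored to the paper's own definition of $\mathcal{M}^N$ (symmetric, positive semi-definite, nonpositive off-diagonal entries). The spectral theorem gives $\rho(\mathbf{B})=s-\lambda_{\min}<s$ directly once $s>\lambda_{\max}$, so the argument is self-contained and elementary. The only price is reliance on symmetry, which costs nothing here because every matrix the paper calls an \textit{M}-matrix is symmetric.

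You also make explicit a point the paper leaves implicit. Its $\mathcal{M}^N$ contains singular positive semi-definite matrices, and only the existence of $\bm{\Sigma}=\bm{\Lambda}^{-1}$ upgrades $\bm{\Lambda}$ to positive definite, which is exactly what makes your series converge.

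One small redundancy: the condition $s\ge\max_i\Lambda_{i,i}$ already follows from $s>\lambda_{\max}$, since $\Lambda_{i,i}=\mathbf{e}_i^{\top}\bm{\Lambda}\mathbf{e}_i\le\lambda_{\max}$.
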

\begin{lemma}\label{Lemma:M-Causal-Closure}
    Every principal submatrix of an \textit{M}-matrix is itself an \textit{M}-matrix~\cite{plemmons1977m}.
\end{lemma}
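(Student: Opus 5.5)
The plan is to argue directly from the definition used in this paper, where $\mathcal{M}^{N}=\mathcal{S}_{+}^{N}\cap\mathcal{Z}^{N}$ as in Eq.~\eqref{Eq:Set-$M$-matrix}. Let $\bm{\Lambda}\in\mathcal{M}^{N}$ and fix an index set $\mathcal{I}\subseteq\{1,\ldots,N\}$ with $|\mathcal{I}|=m$. Write $\bm{\Lambda}_{\mathcal{I}}$ for the principal submatrix obtained by keeping the rows and columns indexed by $\mathcal{I}$. I would verify membership of $\bm{\Lambda}_{\mathcal{I}}$ in each of the three defining conditions separately, and then take the intersection.

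First, \emph{symmetry} is immediate. Each entry satisfies $(\bm{\Lambda}_{\mathcal{I}})_{a,b}=\Lambda_{i_a,i_b}=\Lambda_{i_b,i_a}=(\bm{\Lambda}_{\mathcal{I}})_{b,a}$.

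Second, the \emph{sign pattern} ($\mathcal{Z}^{m}$ membership) is also immediate. Every off-diagonal entry of $\bm{\Lambda}_{\mathcal{I}}$ is $\Lambda_{i_a,i_b}$ with $i_a\neq i_b$, so it is an off-diagonal entry of $\bm{\Lambda}$ and hence non-positive.

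Third, for \emph{positive semi-definiteness} I would use zero-padding. Given any $\mathbf{y}\in\mathbb{R}^{m}$, define $\mathbf{x}\in\mathbb{R}^{N}$ by $x_{i_a}=y_a$ and $x_j=0$ for $j\notin\mathcal{I}$. Then $\mathbf{y}^{\top}\bm{\Lambda}_{\mathcal{I}}\mathbf{y}=\mathbf{x}^{\top}\bm{\Lambda}\mathbf{x}\geq 0$, which gives $\bm{\Lambda}_{\mathcal{I}}\in\mathcal{S}_{+}^{m}$. If $\bm{\Lambda}$ is positive definite, as the optimum of Problem~\eqref{Eq:MTP2-Problem} must be for $\log|\bm{\Lambda}|$ to be finite, the same argument with $\mathbf{y}\neq\mathbf{0}$ (so $\mathbf{x}\neq\mathbf{0}$) gives strict positivity. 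This matters if the lemma is later combined with Lemma~\ref{Lemma:IM-Positivity}, which needs invertibility.

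There is no real obstacle under the paper's definition. The only subtlety arises if the cited source~\cite{plemmons1977m} uses the general (nonsymmetric) definition $\bm{\Lambda}=s\mathbf{I}-\mathbf{B}$ with $\mathbf{B}\geq 0$ and $s\geq\rho(\mathbf{B})$. In that case I would write $\bm{\Lambda}_{\mathcal{I}}=s\mathbf{I}-\mathbf{B}_{\mathcal{I}}$, where $\mathbf{B}_{\mathcal{I}}\geq 0$. I would then invoke the Perron--Frobenius monotonicity $\rho(\mathbf{B}_{\mathcal{I}})\leq\rho(\mathbf{B})$. One way to get it: $\mathbf{B}_{\mathcal{I}}$ padded with zeros is entrywise dominated by $\mathbf{B}$, and $\rho$ is monotone on nonnegative matrices, as follows from $\rho(\mathbf{A})=\lim_k\|\mathbf{A}^k\|^{1/k}$ and $0\leq\mathbf{A}\leq\mathbf{B}\Rightarrow\mathbf{A}^k\leq\mathbf{B}^k$. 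This yields $s\geq\rho(\mathbf{B}_{\mathcal{I}})$, and the conclusion follows in that setting as well.
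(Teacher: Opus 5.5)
Your proof is correct, but it takes a different route from the paper because the paper gives no argument at all. It only cites Plemmons~\cite{plemmons1977m}, where the result is one of the standard properties of general (not necessarily symmetric) $M$-matrices.

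Your first three steps work directly from the paper's own definition $\mathcal{M}^{N}=\mathcal{S}_{+}^{N}\cap\mathcal{Z}^{N}$. Under that definition, symmetry and the sign pattern are inherited trivially, and zero-padding gives positive semi-definiteness. This is more elementary than the citation and makes this step of the appendix self-contained.

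Your final paragraph covers the general setting the citation refers to, $s\mathbf{I}-\mathbf{B}$ with $\mathbf{B}\geq 0$ and $s\geq\rho(\mathbf{B})$, using monotonicity of the spectral radius. That argument is the standard one. For the Gelfand-formula step, fix a norm that is monotone on nonnegative matrices (for instance the Frobenius or maximum-row-sum norm); otherwise $\mathbf{A}^k\leq\mathbf{B}^k$ does not immediately give $\|\mathbf{A}^k\|\leq\|\mathbf{B}^k\|$.

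The two definitions coincide on symmetric matrices, so your two parts are consistent. A symmetric matrix $s\mathbf{I}-\mathbf{B}$ with $\mathbf{B}\geq 0$ is positive semi-definite iff $s\geq\lambda_{\max}(\mathbf{B})$, and $\lambda_{\max}(\mathbf{B})=\rho(\mathbf{B})$ by Perron--Frobenius.

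Your remark that positive definiteness is also inherited is the form the paper actually needs. The lemma is invoked in the proof of Lemma~\ref{Lemma:Pcor_inv-vs-Cor_inv} on a principal block of $\bm{\Sigma}^{-1}$, and membership in $\mathcal{IM}$ presupposes a nonsingular $M$-matrix.
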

\begin{lemma}\label{Lemma:IM-Causal-Closure}
    Every principal submatrix of an inverse \textit{M}-matrix is itself an inverse \textit{M}-matrix~\cite{markham1972nonnegative}.
\end{lemma}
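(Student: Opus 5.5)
We prove the statement through the block-inverse (Schur complement) formula. Let $\bm{\Sigma}\in\mathcal{IM}$, so $\bm{\Sigma}=\bm{\Lambda}^{-1}$ with $\bm{\Lambda}\in\mathcal{M}^{N}$. Since $\bm{\Lambda}$ is invertible and, by the paper's definition of $\mathcal{M}^{N}$, symmetric positive semi-definite, it is in fact positive definite. Fix an index set $\alpha\subseteq\{1,\ldots,N\}$ with complement $\beta$. Simultaneously permuting rows and columns preserves both classes $\mathcal{M}$ and $\mathcal{IM}$, so we may assume $\alpha$ comes first and partition
\[
\bm{\Lambda}=\begin{pmatrix}\mathbf{A} & \mathbf{B}\\ \mathbf{B}^{\top} & \mathbf{C}\end{pmatrix},
\]
with $\mathbf{A}=\bm{\Lambda}_{\alpha\alpha}$ and $\mathbf{C}=\bm{\Lambda}_{\beta\beta}$. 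If $\beta=\emptyset$ there is nothing to prove, so assume it is nonempty.

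\textbf{Step 1: block inversion.} The standard block-inverse formula gives $\bm{\Sigma}_{\alpha\alpha}=\mathbf{S}^{-1}$, where
\[
\mathbf{S}=\mathbf{A}-\mathbf{B}\mathbf{C}^{-1}\mathbf{B}^{\top}
\]
is the Schur complement of $\mathbf{C}$ in $\bm{\Lambda}$. So it suffices to show that $\mathbf{S}$ is an \textit{M}-matrix.

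\textbf{Step 2: the Schur complement is symmetric positive definite.} Symmetry is immediate. Positive definiteness follows because the Schur complement of a positive definite matrix is positive definite: $\mathbf{x}^{\top}\mathbf{S}\mathbf{x}$ equals the quadratic form of $\bm{\Lambda}$ at $(\mathbf{x},-\mathbf{C}^{-1}\mathbf{B}^{\top}\mathbf{x})$. This also shows $\mathbf{S}$ is invertible.

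\textbf{Step 3: sign pattern of the off-diagonal entries.} This step carries the real content.
\begin{itemize}
\item By Lemma~\ref{Lemma:M-Causal-Closure}, $\mathbf{C}$ is an \textit{M}-matrix. It is nonsingular, being a principal submatrix of a positive definite matrix. Hence $\mathbf{C}^{-1}\in\mathcal{IM}$, and Lemma~\ref{Lemma:IM-Positivity} gives $\mathbf{C}^{-1}\geq 0$.
\item The block $\mathbf{B}$ consists only of off-diagonal entries of $\bm{\Lambda}$, so $\mathbf{B}\le 0$ entrywise.
\item Each entry of $\mathbf{B}\mathbf{C}^{-1}\mathbf{B}^{\top}$ has the form $\sum_{k,l}b_{ik}(\mathbf{C}^{-1})_{kl}b_{jl}$. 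Every term is a product of two nonpositive numbers and one nonnegative number, so $\mathbf{B}\mathbf{C}^{-1}\mathbf{B}^{\top}\geq 0$.
\item Therefore, for $i\neq j$, $S_{ij}=A_{ij}-(\mathbf{B}\mathbf{C}^{-1}\mathbf{B}^{\top})_{ij}\le A_{ij}\le 0$.
\end{itemize}

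Combining Steps 2 and 3, $\mathbf{S}$ is a symmetric positive definite matrix with nonpositive off-diagonal entries, i.e.\ $\mathbf{S}\in\mathcal{M}^{|\alpha|}$. By Step 1, $\bm{\Sigma}_{\alpha\alpha}=\mathbf{S}^{-1}\in\mathcal{IM}$, which proves the claim.

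The only delicate point is nonsingularity. We need $\mathbf{C}$ and $\mathbf{S}$ invertible, so that the block formula applies and Lemma~\ref{Lemma:IM-Positivity} can be used for $\mathbf{C}^{-1}$. Both follow from positive definiteness of $\bm{\Lambda}$, which in turn comes from its invertibility, implicit in $\bm{\Sigma}=\bm{\Lambda}^{-1}$.
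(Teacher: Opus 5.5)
The paper gives no proof of this lemma; it only cites Markham (1972), whose result covers general, not necessarily symmetric, nonsingular \textit{M}-matrices. Your proof is correct and self-contained, so it is a genuinely different route.

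Your argument runs through Schur complements. You write $\bm{\Sigma}_{\alpha\alpha}=\mathbf{S}^{-1}$ with $\mathbf{S}=\mathbf{A}-\mathbf{B}\mathbf{C}^{-1}\mathbf{B}^{\top}$. Then $\mathbf{S}$ stays in $\mathcal{M}$ because $\mathbf{B}\le 0$ and $\mathbf{C}^{-1}\ge 0$ give $\mathbf{B}\mathbf{C}^{-1}\mathbf{B}^{\top}\ge 0$, which can only push off-diagonal entries further down.

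This is the mirror image of the block-inverse step the paper uses in proving Lemma~\ref{Lemma:Pcor_inv-vs-Cor_inv}. There, Lemma~\ref{Lemma:M-Causal-Closure} applied to a diagonal block of $\bm{\Lambda}=\bm{\Sigma}^{-1}$ shows that a Schur complement of $\bm{\Sigma}$ lies in $\mathcal{IM}$, which is immediate. You instead show that Schur complements of $\bm{\Lambda}$ remain in $\mathcal{M}$, and that is where the real sign argument lives.

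Your version makes the mechanism transparent and removes this particular citation. You still rely on Lemma~\ref{Lemma:IM-Positivity} for $\mathbf{C}^{-1}\ge 0$, so it is self-contained modulo that lemma.

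The cost is generality. Step 2 uses symmetry and positive definiteness. That fits the paper's definition of $\mathcal{M}^{N}$ as symmetric positive semidefinite matrices with nonpositive off-diagonals, but it would not cover Markham's nonsymmetric setting. There you would replace Step 2 by noting that $\mathbf{S}$ is a Z-matrix with $\mathbf{S}^{-1}=\bm{\Sigma}_{\alpha\alpha}\ge 0$ by Lemma~\ref{Lemma:IM-Positivity}. A Z-matrix with nonnegative inverse is a nonsingular \textit{M}-matrix.
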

\begin{lemma}\label{Lemma:IM-Corr-InEq}
    For $\bm{\Sigma}, \bm{\Gamma} \in \mathcal{IM}$, the following holds~\cite{karlin1983m}:
    \begin{align*}
        \bm{\Gamma}^{-1} \leq \bm{\Sigma}^{-1} \ \rightarrow \ C(\bm{\Sigma}) \leq C(\bm{\Gamma}).
    \end{align*}
\end{lemma}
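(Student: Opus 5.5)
The plan is to reduce the entrywise comparison of correlation matrices to a comparison of $2\times 2$ blocks, using Schur complements. Write $\mathbf{A}=\bm{\Sigma}^{-1}$ and $\mathbf{B}=\bm{\Gamma}^{-1}$. Both are nonsingular \textit{M}-matrices, and the hypothesis is $\mathbf{B}\le\mathbf{A}$ entrywise. The diagonal entries of $C(\bm{\Sigma})$ and $C(\bm{\Gamma})$ both equal $1$, so only a fixed pair $i\neq j$ needs checking. The $(i,j)$ correlation depends only on the $2\times 2$ principal submatrix of $\bm{\Sigma}$ on $P=\{i,j\}$. That submatrix is the inverse of the Schur complement $\mathbf{A}/\mathbf{A}_{SS}=\mathbf{A}_{PP}-\mathbf{A}_{PS}\mathbf{A}_{SS}^{-1}\mathbf{A}_{SP}$, where $S$ is the complement of $P$. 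The same holds for $\bm{\Gamma}$ with $\mathbf{B}$ in place of $\mathbf{A}$.

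\textbf{Step 1: the Schur complements inherit the ordering.} By Lemma~\ref{Lemma:M-Causal-Closure}, $\mathbf{A}_{SS}$ and $\mathbf{B}_{SS}$ are nonsingular \textit{M}-matrices. By Lemma~\ref{Lemma:IM-Positivity}, their inverses are nonnegative. From $\mathbf{B}_{SS}\le\mathbf{A}_{SS}$, the standard monotonicity of inverses of \textit{M}-matrices gives $\mathbf{B}_{SS}^{-1}\ge\mathbf{A}_{SS}^{-1}\ge 0$. The off-diagonal blocks satisfy $\mathbf{B}_{PS}\le\mathbf{A}_{PS}\le 0$, so their absolute values are ordered the other way. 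Taking products of nonnegative quantities then gives
\[
\mathbf{B}_{PS}\mathbf{B}_{SS}^{-1}\mathbf{B}_{SP}\ \ge\ \mathbf{A}_{PS}\mathbf{A}_{SS}^{-1}\mathbf{A}_{SP}\ \ge 0 .
\]
Combined with $\mathbf{B}_{PP}\le\mathbf{A}_{PP}$, this yields $\mathbf{B}/\mathbf{B}_{SS}\le\mathbf{A}/\mathbf{A}_{SS}$ entrywise. Both Schur complements are $2\times 2$ \textit{M}-matrices, because their inverses are principal submatrices of inverse \textit{M}-matrices (Lemma~\ref{Lemma:IM-Causal-Closure}).

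\textbf{Step 2: compare the $2\times 2$ correlations directly.} Write $\mathbf{A}/\mathbf{A}_{SS}=\begin{pmatrix} a & -c\\ -c & b\end{pmatrix}$ and $\mathbf{B}/\mathbf{B}_{SS}=\begin{pmatrix} a' & -c'\\ -c' & b'\end{pmatrix}$, with $a,b,a',b'>0$ and $c,c'\ge 0$. Inverting a $2\times 2$ matrix gives correlation $c/\sqrt{ab}$ for $\bm{\Sigma}$ and $c'/\sqrt{a'b'}$ for $\bm{\Gamma}$. Step 1 gives $c'\ge c$, $a'\le a$ and $b'\le b$. Hence $c'/\sqrt{a'b'}\ge c/\sqrt{ab}$, that is $C(\bm{\Sigma})_{ij}\le C(\bm{\Gamma})_{ij}$. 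Both values are nonnegative, consistent with Lemma~\ref{Lemma:IM-Positivity}, so the inequality also holds in absolute value.

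The main obstacle is Step 1, specifically the monotonicity $\mathbf{B}_{SS}^{-1}\ge\mathbf{A}_{SS}^{-1}$. I would prove it by writing $\mathbf{A}_{SS}=s\mathbf{I}-\mathbf{P}$ and $\mathbf{B}_{SS}=s\mathbf{I}-\mathbf{P}'$ with a common $s$ and $\mathbf{P}'\ge\mathbf{P}\ge 0$. Then compare the Neumann series $\sum_k s^{-k-1}\mathbf{P}^k$ and $\sum_k s^{-k-1}\mathbf{P}'^k$ term by term. Convergence of the series for $\mathbf{B}_{SS}$, i.e.\ spectral radius of $\mathbf{P}'$ below $s$, follows from $\mathbf{B}_{SS}$ being a nonsingular \textit{M}-matrix. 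Convergence for $\mathbf{A}_{SS}$ then follows by Perron--Frobenius monotonicity. Throughout I assume all matrices are nonsingular, as the inverses in the statement require.
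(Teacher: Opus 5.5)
Your proof is correct. It is not the paper's route, because the paper has no route: Lemma~\ref{Lemma:IM-Corr-InEq} is stated with a bare citation to \cite{karlin1983m} and is never proved.

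You make the lemma self-contained. You use only Lemmas~\ref{Lemma:IM-Positivity}--\ref{Lemma:IM-Causal-Closure} together with inverse monotonicity of nonsingular \textit{M}-matrices. You justify the latter yourself via a Neumann series with a common shift: any $s\ge\max_i(\mathbf{A}_{SS})_{ii}$ works, since $\mathbf{B}_{SS}\le\mathbf{A}_{SS}$ then gives $\mathbf{P}'\ge\mathbf{P}\ge 0$. The remaining steps are all sound:
\begin{itemize}
\item the reduction of the $(i,j)$ correlation to the inverse of the $2\times 2$ Schur complement;
\item the ordering $\mathbf{B}/\mathbf{B}_{SS}\le\mathbf{A}/\mathbf{A}_{SS}$, obtained from sign patterns and entrywise monotonicity of products of nonnegative matrices;
\item the final comparison $c'/\sqrt{a'b'}\ge c/\sqrt{ab}$, which needs $c\ge 0$ and $a',b'>0$, both of which you have.
\end{itemize}

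The citation buys the paper brevity. Your argument buys transparency, and it also gives a shortcut for the paper's own use of the lemma. In the proof of Proposition~\ref{Prop:Cor-R-is-smaller-than-Cor-Z}, the lemma is only applied to $2\times 2$ matrices, namely the marginal and conditional covariances of $(z_i,z_j)$, so your Step~2 alone suffices there. Write $\mathbf{A}=\bm{\Lambda}$ for the full precision matrix and $P=\{i,j\}$. The required hypothesis $\bm{\Sigma}_{11}^{-1}\le(\bm{\Sigma}_{11}-\bm{\Sigma}_{12}\bm{\Sigma}_{22}^{-1}\bm{\Sigma}_{21})^{-1}$ then reads $\mathbf{A}_{PP}-\mathbf{A}_{PS}\mathbf{A}_{SS}^{-1}\mathbf{A}_{SP}\le\mathbf{A}_{PP}$. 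This is immediate from $\mathbf{A}_{PS}\le 0$ and $\mathbf{A}_{SS}^{-1}\ge 0$, exactly as in your Step~1. With that, the paper's Woodbury expansion and the cofactor computation in Lemma~\ref{Lemma:Positivity-of-Pcor} become unnecessary.
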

\par
Using the above lemmas, we prove Proposition~\ref{Prop:Cor-R-is-smaller-than-Cor-Z} in the main text according to the following procedure.
\begin{mdframed}[linewidth=1pt, backgroundcolor=white, roundcorner=20pt]
    \textbf{Proof sketch of Proposition~\ref{Prop:Cor-R-is-smaller-than-Cor-Z}}
    \begin{enumerate}
        \item From Lemma~\ref{Lemma:Cor-R-Is-Partial-Cor}, absolute correlations in $\mathbf{R}$ equal partial correlations in $\mathbf{Z}$:
        \[
            |\mathrm{cor}(\mathbf{R})| = \mathrm{pcor}(\mathbf{Z}).
        \]
        \item From Lemma~\ref{Lemma:Cond-Var}, for $\mathbf{Z}_{1} = [\mathbf{z}_{i}, \mathbf{z}_{j}]^{\top}$ after reordering variables,
        \begin{align*}
            \mathrm{cor}(\mathbf{Z}_{1}) &= C(\bm{\Sigma}_{11}), \\
            \mathrm{pcor}(\mathbf{Z}_{1}) &= C\!\left(\bm{\Sigma}_{11} - \bm{\Sigma}_{12} \bm{\Sigma}_{22}^{-1} \bm{\Sigma}_{21}\right).
        \end{align*}
        \item Lemma~\ref{Lemma:Pcor_inv-vs-Cor_inv} yields
        \[
            \left(\bm{\Sigma}_{11} - \bm{\Sigma}_{12} \bm{\Sigma}_{22}^{-1} \bm{\Sigma}_{21}\right)^{-1} \geq \bm{\Sigma}_{11}^{-1}.
        \]
        \item By Lemma~\ref{Lemma:IM-Corr-InEq}, this implies
        \[
            C\!\left(\bm{\Sigma}_{11} - \bm{\Sigma}_{12} \bm{\Sigma}_{22}^{-1} \bm{\Sigma}_{21}\right) \leq C(\bm{\Sigma}_{11}).
        \]
        \item Applying the above to all partitions $\mathbf{Z}_{1}=[\mathbf{z}_{i},\mathbf{z}_{j}]$ proves Proposition~\ref{Prop:Cor-R-is-smaller-than-Cor-Z}.
    \end{enumerate}
\end{mdframed}

\par
Hereafter, we prove each lemma required above.
\begin{lemma}\label{Lemma:Cor-R-Is-Partial-Cor}
The magnitude of correlation coefficients of $\mathbf{R}$ coincides with the partial correlations of $\mathbf{Z}$:
\begin{equation}
    \begin{aligned}
         |\mathrm{cor}(\mathbf{R})| &= \mathrm{pcor}(\mathbf{Z}).
    \end{aligned}
    \label{Eq:Cor-Rij}
\end{equation}
\end{lemma}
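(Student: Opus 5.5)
We compute $\mathrm{cov}(\mathbf{R})$ directly from the definition $\mathbf{R}=\mathbf{W}_{\mathrm{GGM}}\mathbf{Z}=\mathbf{D}^{-1}\bm{\Lambda}\mathbf{Z}$, using Assumption~\ref{Assum:MTP2}. Since the sample covariance is bilinear, we get
\begin{align*}
\mathrm{cov}(\mathbf{R}) = \mathbf{D}^{-1}\bm{\Lambda}\,\mathrm{cov}(\mathbf{Z})\,\bm{\Lambda}^{\top}\mathbf{D}^{-1}.
\end{align*}
The precision matrix $\bm{\Lambda}$ is symmetric, and Assumption~\ref{Assum:MTP2} gives $\mathrm{cov}(\mathbf{Z})=\bm{\Lambda}^{-1}$. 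Hence
\begin{align*}
\mathrm{cov}(\mathbf{R}) = \mathbf{D}^{-1}\bm{\Lambda}\mathbf{D}^{-1}.
\end{align*}
This is the same computation as in the proof of Proposition~\ref{Prop:Cov-R}, carried one factor further.

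Next we normalize to a correlation matrix. The diagonal entries of $\mathbf{D}^{-1}\bm{\Lambda}\mathbf{D}^{-1}$ are $\Lambda_{i,i}/\Lambda_{i,i}^2=1/\Lambda_{i,i}$, and its off-diagonal entries are $\Lambda_{i,j}/(\Lambda_{i,i}\Lambda_{j,j})$. Dividing by $\sqrt{(1/\Lambda_{i,i})(1/\Lambda_{j,j})}$ gives
\begin{align*}
\mathrm{cor}(\mathbf{R})_{i,j} = \frac{\Lambda_{i,j}}{\sqrt{\Lambda_{i,i}\Lambda_{j,j}}} = -\rho_{i,j}.
\end{align*}
Here $\rho_{i,j}$ is the partial correlation defined in Section~\ref{SubSec:Optimization}, and it equals $\mathrm{pcor}(\mathbf{Z})_{i,j}$ because $\bm{\Lambda}$ is the precision matrix of $\mathbf{Z}$ under Assumption~\ref{Assum:MTP2}. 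We need $\Lambda_{i,i}>0$ so that $\mathbf{D}^{-1}$ and the square roots make sense. This holds because $\bm{\Lambda}$ is invertible (Assumption~\ref{Assum:MTP2} presupposes $\bm{\Lambda}^{-1}$) and positive semidefinite, hence positive definite.

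It remains to remove the sign. Since $\bm{\Lambda}\in\mathcal{M}^N$, its off-diagonal entries satisfy $\Lambda_{i,j}\le 0$, so $\rho_{i,j}\ge 0$. Therefore $|\mathrm{cor}(\mathbf{R})_{i,j}|=\rho_{i,j}=\mathrm{pcor}(\mathbf{Z})_{i,j}$ for $i\neq j$, which proves Eq.~\eqref{Eq:Cor-Rij} entrywise.

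The diagonal needs a convention. There $|\mathrm{cor}(\mathbf{R})_{i,i}|=1$, so we take $\mathrm{pcor}(\mathbf{Z})$ to have unit diagonal. We also state explicitly that the absolute value is taken entrywise.

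The main subtlety is not the algebra but the reliance on Assumption~\ref{Assum:MTP2}. The constrained MLE generally does not reproduce the sample covariance; it does so only on the support of the free entries and the diagonal. The identity is therefore exact only under the assumption, and we flag this in the proof.
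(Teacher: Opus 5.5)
Your proof is correct and follows the paper's argument: compute $\mathrm{cov}(\mathbf{R})=\mathbf{D}^{-1}\bm{\Lambda}\mathbf{D}^{-1}$ via Assumption~\ref{Assum:MTP2}, normalize to get $\Lambda_{i,j}/\sqrt{\Lambda_{i,i}\Lambda_{j,j}}$, and use $\Lambda_{i,j}\le 0$ to identify the absolute value with the partial correlation. Your additions (positivity of $\Lambda_{i,i}$, the unit-diagonal convention for $\mathrm{pcor}$, entrywise absolute values) are valid refinements the paper leaves implicit, and your caveat about Assumption~\ref{Assum:MTP2} actually understates the problem: $\mathrm{cov}(\mathbf{Z})$ annihilates $\mathbf{U}_k$, so it is singular whenever $k\ge 1$ and can never equal $\bm{\Lambda}^{-1}$ exactly.
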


\begin{proof}
    First, from $\mathrm{cov}(\mathbf{R}) = \mathrm{cov}(\mathbf{D}^{-1} \bm{\Lambda} \mathbf{Z}) 
    = \mathbf{D}^{-1} \bm{\Lambda} \mathbf{D}^{-1}$, we have 
    $[\mathrm{cov}(\mathbf{R})]_{i,j} = \frac{\Lambda_{i,j}}{\Lambda_{i,i} \Lambda_{j,j}}$.  
    Therefore, the $(i,j)$-th entry of the correlation matrix $[\mathrm{cor}(\mathbf{R})]_{i,j}$ can be written as
    \begin{equation}
        \begin{aligned}
             [\mathrm{cor}(\mathbf{R})]_{i,j} &= \frac{\Lambda_{i,j}}{\sqrt{\Lambda_{i,i}} \sqrt{\Lambda_{j,j}}}.
        \end{aligned}
    \end{equation}
    Moreover, the partial correlation $\rho_{p}(\mathbf{z}_{i}, \mathbf{z}_{j})$ of $\mathbf{Z}$ is given by 
    $-\frac{\Lambda_{i,j}}{\sqrt{\Lambda_{i,i}} \sqrt{\Lambda_{j,j}}}$.  
    Since $\bm{\Lambda} \in \mathcal{M}$, we have $\Lambda_{i,j} \leq 0$.  
    Thus,
    \begin{align*}
        \big| [\mathrm{cor}(\mathbf{R})]_{i,j} \big| = \rho_{p}(\mathbf{z}_{i}, \mathbf{z}_{j}).
    \end{align*}
\par
From the above, letting $\mathrm{pcor}(\bm{\Sigma})$ denote the partial correlation matrix of the covariance matrix $\bm{\Sigma}$, we can write $|\mathrm{cor}(\mathbf{R})| = \mathrm{pcor}(\mathbf{Z})$. 
Therefore, proving Proposition~\ref{Prop:Cor-R-is-smaller-than-Cor-Z} is equivalent to showing that $\mathrm{pcor}(\mathbf{Z}) \leq \mathrm{cor}(\mathbf{Z})$.
\end{proof}
\begin{lemma}\label{Lemma:Cond-Var}
    First, partition $\mathbf{Z}$ as
    \[
    \mathbf{Z}=
    \begin{bmatrix}
        \mathbf{Z}_{1} \\
        \mathbf{Z}_{2}
    \end{bmatrix},
    \]
    and let $\bm{\Sigma} = \mathrm{cov}(\mathbf{Z})$ be given by
    \begin{align*}
    \bm{\Sigma} &= \begin{bmatrix}
        \mathrm{cov}(\mathbf{Z}_{1}) & \mathrm{cov}(\mathbf{Z}_{1},\mathbf{Z}_{2}) \\
        \mathrm{cov}(\mathbf{Z}_{2}, \mathbf{Z}_{1}) & \mathrm{cov}(\mathbf{Z}_{2})
    \end{bmatrix} \\
    &= \begin{bmatrix}
        \bm{\Sigma}_{11} & \bm{\Sigma}_{12} \\
        \bm{\Sigma}_{21} & \bm{\Sigma}_{22}
    \end{bmatrix},
    \end{align*}
    where $\bm{\Sigma}_{12} = \bm{\Sigma}_{21}^\top$. In this case, the following holds:
    \begin{equation}
        \begin{aligned}
            \mathrm{cor}(\mathbf{Z}_{1}) &= C(\bm{\Sigma}_{11}), \\
            \mathrm{pcor}(\mathbf{Z}_{1}) &= C\left(\bm{\Sigma}_{11} - \bm{\Sigma}_{12} \bm{\Sigma}_{22}^{-1} \bm{\Sigma}_{21}\right).
        \end{aligned}
        \label{Eq:Lemma-Pcor}
    \end{equation}
\end{lemma}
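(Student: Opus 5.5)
The first identity is immediate from the definitions. Since $\mathbf{Z}_{1}$ is a sub-block of rows of $\mathbf{Z}$, its sample covariance is the corresponding principal block of $\bm{\Sigma}=\mathrm{cov}(\mathbf{Z})$. Explicitly, $\mathrm{cov}(\mathbf{Z}_{1})=\frac{1}{T}\mathbf{Z}_{1}\mathbf{H}\mathbf{Z}_{1}^{\top}=\bm{\Sigma}_{11}$. Normalizing by the square roots of the diagonal, the correlation matrix of $\mathbf{Z}_{1}$ is $C(\bm{\Sigma}_{11})$. Nothing beyond the definition of $C(\cdot)$ is needed here.

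For the second identity I will interpret $\mathrm{pcor}(\mathbf{Z}_{1})$ as the correlation of $\mathbf{Z}_{1}$ after linearly regressing out $\mathbf{Z}_{2}$. Under Assumption~\ref{Assum:MTP2} this is the partial correlation given all other assets.

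\textbf{Step 1 (residual covariance).} Define the least-squares residual $\mathbf{E}_{1}=\mathbf{Z}_{1}-\bm{\Sigma}_{12}\bm{\Sigma}_{22}^{-1}\mathbf{Z}_{2}$, centered via $\mathbf{H}$. Here $\bm{\Sigma}_{22}$ is invertible because $\bm{\Sigma}=\bm{\Lambda}^{-1}$ is positive definite, so its principal submatrices are as well. Expanding the covariance gives
\[
\mathrm{cov}(\mathbf{E}_{1})=\bm{\Sigma}_{11}-2\bm{\Sigma}_{12}\bm{\Sigma}_{22}^{-1}\bm{\Sigma}_{21}+\bm{\Sigma}_{12}\bm{\Sigma}_{22}^{-1}\bm{\Sigma}_{22}\bm{\Sigma}_{22}^{-1}\bm{\Sigma}_{21}=\bm{\Sigma}_{11}-\bm{\Sigma}_{12}\bm{\Sigma}_{22}^{-1}\bm{\Sigma}_{21},
\]
which is the Schur complement. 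Applying $C(\cdot)$ then gives the stated formula for $\mathrm{pcor}(\mathbf{Z}_{1})$.

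\textbf{Step 2 (consistency with the precision matrix).} The main point to verify is that this definition matches the precision-matrix definition used in Lemma~\ref{Lemma:Cor-R-Is-Partial-Cor}. There the partial correlation is $\rho_{p}(\mathbf{z}_i,\mathbf{z}_j)=-\Lambda_{i,j}/\sqrt{\Lambda_{i,i}\Lambda_{j,j}}$. By the block inversion formula, $(\bm{\Sigma}_{11}-\bm{\Sigma}_{12}\bm{\Sigma}_{22}^{-1}\bm{\Sigma}_{21})^{-1}=\bm{\Lambda}_{11}$, which is a $2\times2$ matrix when $\mathbf{Z}_{1}=[\mathbf{z}_i,\mathbf{z}_j]^{\top}$. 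Inverting a $2\times2$ matrix explicitly gives
\[
\bm{\Sigma}_{11}-\bm{\Sigma}_{12}\bm{\Sigma}_{22}^{-1}\bm{\Sigma}_{21}=\frac{1}{\det\bm{\Lambda}_{11}}\begin{bmatrix}\Lambda_{j,j}&-\Lambda_{i,j}\\-\Lambda_{i,j}&\Lambda_{i,i}\end{bmatrix}.
\]
Normalizing this matrix, its off-diagonal entry is $-\Lambda_{i,j}/\sqrt{\Lambda_{i,i}\Lambda_{j,j}}$, so the two definitions agree. This is the only step needing care, mainly to avoid ambiguity about what $\mathrm{pcor}$ means. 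The $2\times2$ computation settles it, and by Lemma~\ref{Lemma:M-Causal-Closure} the relevant blocks remain \textit{M}-matrices, so all quantities are well defined.
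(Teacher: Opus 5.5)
Your proof is correct, but it reaches the Schur complement by a different route than the paper. The paper's proof is a one-line appeal to the Gaussian model. It reads $\mathrm{pcor}(\mathbf{Z}_{1})$ as the correlation of $\mathbf{Z}_{1}$ conditioned on $\mathbf{Z}_{2}$ under $\mathcal{N}(\mathbf{0},\bm{\Lambda}^{-1})$, and cites the standard fact that the conditional covariance of a multivariate normal vector is $\bm{\Sigma}_{11}-\bm{\Sigma}_{12}\bm{\Sigma}_{22}^{-1}\bm{\Sigma}_{21}$. It then relies on Assumption~\ref{Assum:MTP2} to identify that model covariance with $\mathrm{cov}(\mathbf{Z})$.

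You instead define partial correlation through least-squares residuals. You then check, by bilinearity of the sample covariance, that $\frac{1}{T}\mathbf{E}_{1}\mathbf{H}\mathbf{E}_{1}^{\top}$ equals that Schur complement exactly. This gives you a distribution-free identity stated directly for the sample quantities $\mathrm{cov}(\cdot)$ the paper works with. Normality plays no role, and Assumption~\ref{Assum:MTP2} is needed only for invertibility of $\bm{\Sigma}_{22}$. The paper's version is shorter and ties directly to the conditional-mean interpretation in Eq.~\eqref{Eq:GGM-Factor} used to build $\mathbf{W}_{\mathrm{GGM}}$.

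Your Step~2 adds something the paper leaves implicit. The block-inversion identity $(\bm{\Sigma}_{11}-\bm{\Sigma}_{12}\bm{\Sigma}_{22}^{-1}\bm{\Sigma}_{21})^{-1}=\bm{\Lambda}_{11}$, together with the explicit $2\times 2$ inverse, shows that the $\mathrm{pcor}$ of this lemma equals $-\Lambda_{i,j}/\sqrt{\Lambda_{i,i}\Lambda_{j,j}}$, the quantity used in Lemma~\ref{Lemma:Cor-R-Is-Partial-Cor}. That is what guarantees steps 1 and 2 of the paper's proof sketch refer to the same object.

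One small correction: your closing appeal to Lemma~\ref{Lemma:M-Causal-Closure} is not what makes the quantities well defined. Well-definedness comes from positive definiteness of $\bm{\Lambda}$, which gives $\Lambda_{i,i}>0$ and $\det\bm{\Lambda}_{11}>0$. The \textit{M}-matrix sign pattern plays no role in this lemma.
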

\begin{proof}
    By definition of partial correlation, the partial correlation is the correlation coefficient conditioned on all variables except $\mathbf{Z}_{1}$. 
    From properties of the conditional distribution of a multivariate normal vector, Lemma~\ref{Lemma:Cond-Var} follows immediately.
\end{proof}
Here, by taking $\mathbf{Z}_{1} = [\mathbf{z}_{i}, \mathbf{z}_{j}]^{\top}$ and reordering of variables, the result for the block-matrix formulation applies directly to correlation between $\mathbf{z}_{i}$ and $\mathbf{z}_{j}$.
\par
From Lemma~\ref{Lemma:IM-Causal-Closure}, it follows that $\bm{\Sigma}_{11} \in \mathcal{IM}$.  
Finally, we prove the following lemma for inverse \textit{M}-matrices.
\begin{lemma}\label{Lemma:Pcor_inv-vs-Cor_inv}
    If $\bm{\Sigma} \in \mathcal{IM}$, then  
    $\bm{\Sigma}_{11} - \bm{\Sigma}_{12} \bm{\Sigma}_{22}^{-1} \bm{\Sigma}_{21} \in \mathcal{IM}$ 
    and
    \begin{align*}
        \left(\bm{\Sigma}_{11} - \bm{\Sigma}_{12} \bm{\Sigma}_{22}^{-1} \bm{\Sigma}_{21}\right)^{-1} \geq \bm{\Sigma}_{11}^{-1}
    \end{align*}
    hold.
\end{lemma}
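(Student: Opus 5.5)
The plan is to express everything through the precision matrix $\bm{\Lambda}=\bm{\Sigma}^{-1}\in\mathcal{M}$ and its block partition
\[
\bm{\Lambda}=\begin{bmatrix}\bm{\Lambda}_{11} & \bm{\Lambda}_{12}\\ \bm{\Lambda}_{21} & \bm{\Lambda}_{22}\end{bmatrix},
\]
conformal with the partition of $\bm{\Sigma}$ in Lemma~\ref{Lemma:Cond-Var}. The first step is the standard block-inversion identity
\[
\left(\bm{\Sigma}_{11}-\bm{\Sigma}_{12}\bm{\Sigma}_{22}^{-1}\bm{\Sigma}_{21}\right)^{-1}=\bm{\Lambda}_{11}.
\]
By Lemma~\ref{Lemma:M-Causal-Closure}, $\bm{\Lambda}_{11}$ is an \textit{M}-matrix. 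Hence the Schur complement $\bm{\Sigma}_{11}-\bm{\Sigma}_{12}\bm{\Sigma}_{22}^{-1}\bm{\Sigma}_{21}=\bm{\Lambda}_{11}^{-1}$ is the inverse of an \textit{M}-matrix, i.e.\ it lies in $\mathcal{IM}$. This settles the membership claim. Since $\bm{\Lambda}$ is the (invertible) estimated precision matrix, positive definiteness of $\bm{\Lambda}_{11}$ and $\bm{\Lambda}_{22}$ follows from that of $\bm{\Lambda}$.

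For the inequality, I will apply the block-inversion formula in the other direction, expressing $\bm{\Sigma}_{11}$ as a Schur complement of $\bm{\Lambda}$:
\[
\bm{\Sigma}_{11}=\left(\bm{\Lambda}_{11}-\bm{\Lambda}_{12}\bm{\Lambda}_{22}^{-1}\bm{\Lambda}_{21}\right)^{-1},
\qquad\text{so}\qquad
\bm{\Sigma}_{11}^{-1}=\bm{\Lambda}_{11}-\bm{\Lambda}_{12}\bm{\Lambda}_{22}^{-1}\bm{\Lambda}_{21}.
\]
The claim then reduces to the entrywise statement $\bm{\Lambda}_{12}\bm{\Lambda}_{22}^{-1}\bm{\Lambda}_{21}\geq 0$. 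Off-diagonal blocks of an \textit{M}-matrix are entrywise nonpositive, so $\bm{\Lambda}_{12}\leq 0$ and $\bm{\Lambda}_{21}=\bm{\Lambda}_{12}^{\top}\leq 0$. By Lemma~\ref{Lemma:M-Causal-Closure}, $\bm{\Lambda}_{22}$ is an \textit{M}-matrix, and by Lemma~\ref{Lemma:IM-Positivity} its inverse satisfies $\bm{\Lambda}_{22}^{-1}\geq 0$. A product of the form (nonpositive)(nonnegative)(nonpositive) is entrywise nonnegative. Therefore
\[
\bm{\Lambda}_{11}\geq \bm{\Lambda}_{11}-\bm{\Lambda}_{12}\bm{\Lambda}_{22}^{-1}\bm{\Lambda}_{21},
\]
which is exactly $\left(\bm{\Sigma}_{11}-\bm{\Sigma}_{12}\bm{\Sigma}_{22}^{-1}\bm{\Sigma}_{21}\right)^{-1}\geq\bm{\Sigma}_{11}^{-1}$.

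The main obstacle is a regularity issue rather than the algebra. The class $\mathcal{M}^N$ in the paper is defined with positive \emph{semi}definiteness, whereas the block-inversion identities need $\bm{\Lambda}$, $\bm{\Lambda}_{11}$, $\bm{\Lambda}_{22}$ and $\bm{\Sigma}_{22}$ to be invertible. I would handle this by noting that Assumption~\ref{Assum:MTP2} presupposes $\bm{\Lambda}$ is invertible, so $\bm{\Lambda}$ is a nonsingular \textit{M}-matrix, i.e.\ positive definite. Principal submatrices of a positive definite matrix are positive definite, which gives invertibility of $\bm{\Lambda}_{11}$ and $\bm{\Lambda}_{22}$; invertibility of $\bm{\Sigma}_{22}$ follows in the same way from positive definiteness of $\bm{\Sigma}=\bm{\Lambda}^{-1}$. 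With that in hand, Lemma~\ref{Lemma:IM-Positivity} applies legitimately to $\bm{\Lambda}_{22}^{-1}$, and the rest is the entrywise sign argument above.
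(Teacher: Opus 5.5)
Your proof is correct, but for the inequality it takes a different route from the paper; the membership part is the same, since both identify the inverse of the Schur complement with the top-left block of $\bm{\Sigma}^{-1}$ and invoke Lemma~\ref{Lemma:M-Causal-Closure}.

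The paper stays in covariance coordinates. It expands $(\bm{\Sigma}_{11}-\bm{\Sigma}_{12}\bm{\Sigma}_{22}^{-1}\bm{\Sigma}_{21})^{-1}$ by Woodbury as $\bm{\Sigma}_{11}^{-1}+\bigl(\bm{\Sigma}_{11}^{-1}\bm{\Sigma}_{12}\mathbf{S}_{22}^{-1}\bigr)\bigl(\bm{\Sigma}_{21}\bm{\Sigma}_{11}^{-1}\bigr)$. Nonnegativity of the first factor comes from the nonpositive off-diagonal block of $\bm{\Sigma}^{-1}$. For the second factor, the regression coefficients $\bm{\Sigma}_{21}\bm{\Sigma}_{11}^{-1}$, it needs the separate cofactor-based Lemma~\ref{Lemma:Positivity-of-Pcor}, applied row by row to principal submatrices of $\bm{\Sigma}$.

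You instead write $\bm{\Sigma}_{11}^{-1}$ as the Schur complement $\bm{\Lambda}_{11}-\bm{\Lambda}_{12}\bm{\Lambda}_{22}^{-1}\bm{\Lambda}_{21}$ in the precision matrix. The increment is then visibly (nonpositive)(nonnegative)(nonpositive), using only the sign pattern of $\bm{\Lambda}$ and Lemmas~\ref{Lemma:M-Causal-Closure} and~\ref{Lemma:IM-Positivity}. The two increments are in fact the same matrix, because $\bm{\Lambda}_{12}=-\bm{\Sigma}_{11}^{-1}\bm{\Sigma}_{12}\mathbf{S}_{22}^{-1}$ and $\bm{\Lambda}_{22}=\mathbf{S}_{22}^{-1}$. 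Moreover, reading off the $(2,1)$ block of $\bm{\Lambda}\bm{\Sigma}=\mathbf{I}$ gives $\bm{\Sigma}_{21}\bm{\Sigma}_{11}^{-1}=-\bm{\Lambda}_{22}^{-1}\bm{\Lambda}_{21}\geq 0$ in one line. So your route makes Lemma~\ref{Lemma:Positivity-of-Pcor} and its determinant bookkeeping unnecessary; that bookkeeping, as written, also divides by $\gamma_i$, which can vanish.

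What the paper's factorization buys is an explicit, interpretable fact: under MTP$_2$, regression coefficients of one block on another are nonnegative. For this lemma itself, your argument is shorter and self-contained.

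On regularity, you need not lean on Assumption~\ref{Assum:MTP2}. The hypothesis $\bm{\Sigma}\in\mathcal{IM}$ already asserts that $\bm{\Sigma}^{-1}$ exists, so $\bm{\Lambda}$ is a nonsingular symmetric \textit{M}-matrix and hence positive definite.
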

Here, we present the following lemma, which is useful for proving Lemma~\ref{Lemma:Pcor_inv-vs-Cor_inv}.
\begin{lemma}\label{Lemma:Positivity-of-Pcor}
    Let $\bm{\Gamma} \in \mathcal{IM}^{N+1}$ be given by
    \begin{align*}
        \bm{\Gamma} =
        \begin{bmatrix}
            \gamma & \bm{\gamma}^{\top} \\
            \bm{\gamma} & \bm{\Gamma}'
        \end{bmatrix}.
    \end{align*}
    Then, the following holds:
    \begin{align*}
        (\bm{\Gamma}')^{-1} \bm{\gamma} \geq 0 .
    \end{align*}
\end{lemma}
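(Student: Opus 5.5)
We prove Lemma~\ref{Lemma:Positivity-of-Pcor} by expressing $(\bm{\Gamma}')^{-1}\bm{\gamma}$ through the blocks of the \textit{M}-matrix $\bm{\Gamma}^{-1}$. Write
\begin{align*}
    \bm{\Gamma}^{-1} = \begin{bmatrix} a & \mathbf{b}^{\top} \\ \mathbf{b} & \mathbf{B} \end{bmatrix},
\end{align*}
partitioned conformally with $\bm{\Gamma}$. Since $\bm{\Gamma}^{-1}$ is an \textit{M}-matrix, its off-diagonal entries are non-positive, so $\mathbf{b} \leq 0$. Its diagonal entry satisfies $a>0$, because $\bm{\Gamma}^{-1}$ is positive definite: it is the inverse of a nonsingular positive semi-definite matrix. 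By Lemma~\ref{Lemma:M-Causal-Closure}, $\mathbf{B}$ is again an \textit{M}-matrix. Moreover $\mathbf{B}$ is positive definite as a principal submatrix of a positive definite matrix, so $\mathbf{B}^{-1}$ exists and is an inverse \textit{M}-matrix. Lemma~\ref{Lemma:IM-Positivity} then gives $\mathbf{B}^{-1} \geq 0$.

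The key step is to identify $(\bm{\Gamma}')^{-1}\bm{\gamma}$ using the block identity $\bm{\Gamma}\bm{\Gamma}^{-1}=\mathbf{I}$. Its lower-left block gives
\begin{align*}
    a\,\bm{\gamma} + \bm{\Gamma}'\mathbf{b} = \mathbf{0},
\end{align*}
so $(\bm{\Gamma}')^{-1}\bm{\gamma} = -\mathbf{b}/a$, provided $\bm{\Gamma}'$ is invertible. Invertibility holds because $\bm{\Gamma}'$ is a principal submatrix of the positive definite matrix $\bm{\Gamma}$; alternatively, Lemma~\ref{Lemma:IM-Causal-Closure} says $\bm{\Gamma}'$ is itself an inverse \textit{M}-matrix. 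Since $\mathbf{b} \leq 0$ and $a>0$, we obtain $(\bm{\Gamma}')^{-1}\bm{\gamma} = -\mathbf{b}/a \geq 0$ entrywise, which is the claim. Conceptually, $(\bm{\Gamma}')^{-1}\bm{\gamma}$ is the vector of regression coefficients of the first variable on the others, and these are proportional to minus the off-diagonal entries of the precision matrix.

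This argument never uses $\mathbf{B}^{-1}\geq 0$, so we could drop that remark. However, it is a useful cross-check: the Schur-complement formula gives $\bm{\gamma} = -\mathbf{B}^{-1}\mathbf{b}/(a-\mathbf{b}^{\top}\mathbf{B}^{-1}\mathbf{b})$, which together with $\mathbf{B}^{-1}\geq 0$ confirms $\bm{\gamma}\geq 0$, consistent with Lemma~\ref{Lemma:IM-Positivity}.

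The main obstacle is a convention issue rather than a computation. The paper defines \textit{M}-matrices as positive semi-definite, so we must ensure invertibility and $a>0$. We do this by working with $\bm{\Gamma}\in\mathcal{IM}$, whose inverse exists by definition. A positive semi-definite invertible matrix is positive definite, so all its diagonal entries and principal submatrices are positive definite. This justifies dividing by $a$ and inverting $\bm{\Gamma}'$.
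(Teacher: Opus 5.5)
Your proof is correct and rests on the same identity as the paper's. That identity says $(\bm{\Gamma}')^{-1}\bm{\gamma}$ is a positive multiple of minus the off-diagonal part of the first column of the \textit{M}-matrix $\bm{\Gamma}^{-1}$, so its sign is forced by the sign pattern of $\bm{\Gamma}^{-1}$. The difference is how the identity is obtained. The paper expands $[(\bm{\Gamma}')^{-1}\bm{\gamma}]_i$ by cofactors. It then recognizes the resulting sum as the cofactor of $\bm{\Gamma}$ that produces $[\bm{\Gamma}^{-1}]_{i+1,1}$, which yields the constant $\det(\bm{\Gamma})/\det(\bm{\Gamma}')$. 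You read the identity directly off the lower-left block of $\bm{\Gamma}\bm{\Gamma}^{-1}=\mathbf{I}$. Your $1/a$ is the same constant, since $a=[\bm{\Gamma}^{-1}]_{1,1}=\det(\bm{\Gamma}')/\det(\bm{\Gamma})$.

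Your route is shorter and avoids the index and sign bookkeeping. It also exposes a slip in the paper's version. The factor $\gamma_i$ in the paper's cofactor formula for $[\bm{\Xi}]_{i+1,1}$ (with $\bm{\Xi}=\bm{\Gamma}^{-1}$) should not be there, and neither should the resulting division by $\gamma_i$ in its final expression. The correct identity is $[(\bm{\Gamma}')^{-1}\bm{\gamma}]_i=-[\bm{\Xi}]_{i+1,1}/[\bm{\Xi}]_{1,1}$, which is exactly your $-\mathbf{b}/a$. Two consequences follow:
\begin{itemize}
\item the paper's appeal to Lemma~\ref{Lemma:IM-Positivity} for $\gamma_i\geq 0$ is unnecessary;
\item your argument handles the case $\gamma_i=0$ (e.g.\ a block-diagonal $\bm{\Gamma}$), where the paper's expression as written divides by zero.
\end{itemize}

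One small point: your first justification of $a>0$ is phrased indirectly. The direct reason, which you state at the end, is that $\bm{\Gamma}^{-1}$ is positive semi-definite by the definition of $\mathcal{M}$ and invertible, hence positive definite.
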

\begin{proof}
    Let $\bm{\Xi} = \bm{\Gamma}^{-1}$, $\bm{\Xi}' = (\bm{\Gamma}')^{-1}$, and $\bm{\gamma} = [\gamma_{1}, \gamma_{2}, \ldots, \gamma_{N}]^{\top}$.  
    From the cofactor expansion theorem, letting $\tilde{\bm{\Gamma}}_{i,j}'$ denote the $(i,j)$-th adjugate matrix of $\bm{\Gamma}'$, we have
    \[
    [\bm{\Xi}']_{i,j} = \frac{1}{\det(\bm{\Gamma}')} (-1)^{i+j} \det(\tilde{\bm{\Gamma}}_{i,j}'),
    \]
    and therefore
    \begin{align*}
        [\bm{\Xi}' \bm{\gamma}]_{i}
        &= \sum_{j=1}^{N} \frac{1}{\det(\bm{\Gamma}')} (-1)^{i+j} \gamma_{j} \det(\tilde{\bm{\Gamma}}_{i,j}') \\
        &= \frac{(-1)^{i}}{\det(\bm{\Gamma}')} \sum_{j=1}^{N} (-1)^{j} \gamma_{j} \det(\tilde{\bm{\Gamma}}_{i,j}').
    \end{align*}
    Here, let $\hat{\bm{\Gamma}}'_{i,:}$ denote the matrix obtained by removing the $i$-th row of $\bm{\Gamma}'$.  
    Then, by the cofactor expansion theorem, the following equation holds:
    \[
    \sum_{j=1}^{N} (-1)^{j} \gamma_{j} \det(\tilde{\bm{\Gamma}}_{i,j}') 
    = -\det\!\left(
    \begin{bmatrix}
        \bm{\gamma}^{\top} \\
        \hat{\bm{\Gamma}}'_{i,:}
    \end{bmatrix}
    \right).
    \]
    Furthermore, from the expression of an inverse matrix using cofactors, the $(i+1,1)$-th element of $\bm{\Xi}$ can be written as
    \[
    [\bm{\Xi}]_{i+1,1} = \frac{1}{\det(\bm{\Gamma})} (-1)^{i+2} \gamma_{i} 
    \det\!\left(
    \begin{bmatrix}
        \bm{\gamma}^{\top} \\
        \hat{\bm{\Gamma}}'_{i,:}
    \end{bmatrix}
    \right).
    \]
    Substituting this into the previous expression, we obtain
    \begin{align*}
        [\bm{\Xi}' \bm{\gamma}]_{i} 
        &= -\frac{(-1)^{i}}{\det(\bm{\Gamma}')} 
           \det\!\left(
           \begin{bmatrix}
               \bm{\gamma}^{\top} \\
               \hat{\bm{\Gamma}}'_{i,:}
           \end{bmatrix}
           \right) \\
        &= -\frac{\det(\bm{\Gamma})}{\det(\bm{\Gamma}')} \cdot \frac{[\bm{\Xi}]_{i+1,1}}{\gamma_{i}}.
    \end{align*}
    Since $\bm{\Gamma} \in \mathcal{IM}^{N+1}$, Lemma~\ref{Lemma:IM-Positivity} implies that $\gamma_{i} \geq 0$.  
    Moreover, because $\bm{\Xi} \in \mathcal{M}^{N+1}$, we have $[\bm{\Xi}]_{i+1,1} \leq 0$.  
    As both $\bm{\Gamma}$ and $\bm{\Gamma}'$ are positive definite, it follows that $\det(\bm{\Gamma}) > 0$ and $\det(\bm{\Gamma}') > 0$.  
    Therefore, $[\bm{\Xi}' \bm{\Gamma}]_{i} \geq 0$, and hence $\bm{\Xi}' \bm{\Gamma} \geq 0$.
\end{proof}
By using Lemma~\ref{Lemma:Positivity-of-Pcor}, we now prove Lemma~\ref{Lemma:Pcor_inv-vs-Cor_inv}.
\begin{proof}
    First, by Woodbury’s identity, we can expand as
    \begin{align}
        &(\bm{\Sigma}_{11} - \bm{\Sigma}_{12} \bm{\Sigma}_{22}^{-1} \bm{\Sigma}_{21})^{-1} \notag \\
        &= \bm{\Sigma}_{11}^{-1} + \bm{\Sigma}_{11}^{-1} \bm{\Sigma}_{12} \bm{\mathrm{S}}_{22}^{-1} \bm{\Sigma}_{21} \bm{\Sigma}_{11}^{-1}, \label{Eq:Woodbury}
    \end{align}
    where we define 
    \[
    \mathbf{S}_{22} = \bm{\Sigma}_{22} - \bm{\Sigma}_{21} \bm{\Sigma}_{11}^{-1} \bm{\Sigma}_{12}.
    \]
    From the expression for the Schur complement, the inverse of a block matrix can be written as
    \begin{align*}
        \bm{\Sigma}^{-1} =
        \begin{bmatrix}
            (\bm{\Sigma}_{11} - \bm{\Sigma}_{12} \bm{\Sigma}_{22}^{-1} \bm{\Sigma}_{21})^{-1} & -\bm{\Sigma}_{11}^{-1} \bm{\Sigma}_{12} \mathbf{S}_{22}^{-1} \\
            -\mathbf{S}_{22}^{-1} \bm{\Sigma}_{21} \bm{\Sigma}_{11}^{-1} & \mathbf{S}_{22}^{-1}
        \end{bmatrix}.
    \end{align*}
    From this, we have $(\bm{\Sigma}_{11} - \bm{\Sigma}_{12} \bm{\Sigma}_{22}^{-1} \bm{\Sigma}_{21})^{-1} \in \mathcal{M}$ (Lemma~\ref{Lemma:M-Causal-Closure}), and thus $\bm{\Sigma}_{11} - \bm{\Sigma}_{12} \bm{\Sigma}_{22}^{-1} \bm{\Sigma}_{21} \in \mathcal{IM}$.  
    \par
    In the following, we evaluate the sign of the second term in Eq.~\eqref{Eq:Woodbury}.    
    First, since $\bm{\Sigma} \in \mathcal{IM}$, it follows that $\bm{\Sigma}^{-1} \in \mathcal{M}$.  
    From the properties of an \textit{M}-matrix, the off-diagonal elements are nonpositive, and hence 
    \[
    \bm{\Sigma}_{11}^{-1} \bm{\Sigma}_{12} \bm{\mathrm{S}}_{22}^{-1} \geq 0.
    \]
    Next, we evaluate the sign of $\bm{\Sigma}_{21} \bm{\Sigma}_{11}^{-1}$.  
    Since all principal submatrices are \textit{M}-matrices (Lemma~\ref{Lemma:IM-Causal-Closure}), Lemma~\ref{Lemma:Positivity-of-Pcor} applies to each row, which yields 
    \[
    \bm{\Sigma}_{21} \bm{\Sigma}_{11}^{-1} \geq 0.
    \]
    Therefore, from $\bm{\Sigma}_{11}^{-1} \bm{\Sigma}_{12} \bm{\mathrm{S}}_{22}^{-1} \bm{\Sigma}_{21} \bm{\Sigma}_{11}^{-1} \geq 0$, we obtain
    \[
    (\bm{\Sigma}_{11} - \bm{\Sigma}_{12} \bm{\Sigma}_{22}^{-1} \bm{\Sigma}_{21})^{-1} \geq \bm{\Sigma}_{11}^{-1}.
    \]
\end{proof}

\section{Additional Results}
\label{Apx:Additional-Results}

\begin{figure*}[t]
  \centering
  \begin{subfigure}[t]{0.49\textwidth}
    \centering
    \includegraphics[width=\linewidth]{Figures/All-Assets/SP500-Changing-Datasize.pdf}
    \caption{All Assets (S\&P500)}
  \end{subfigure}
  \hfill
  \begin{subfigure}[t]{0.49\textwidth}
    \centering
    \includegraphics[width=\linewidth]{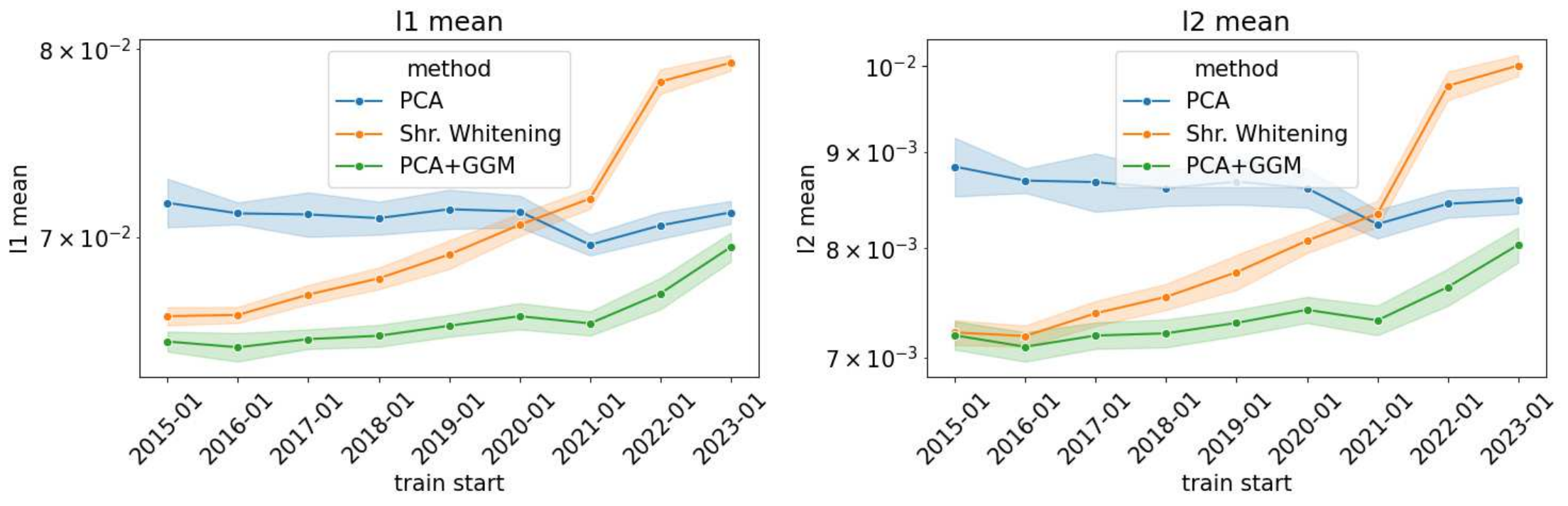}
    \caption{Random Assets (S\&P500)}
  \end{subfigure}
  \centering
  \begin{subfigure}[t]{0.49\textwidth}
    \centering
    \includegraphics[width=\linewidth]{Figures/All-Assets/TOPIX-Changing-Datasize.pdf}
    \caption{All Assets (TOPIX500)}
  \end{subfigure}
  \hfill
  \begin{subfigure}[t]{0.49\textwidth}
    \centering
    \includegraphics[width=\linewidth]{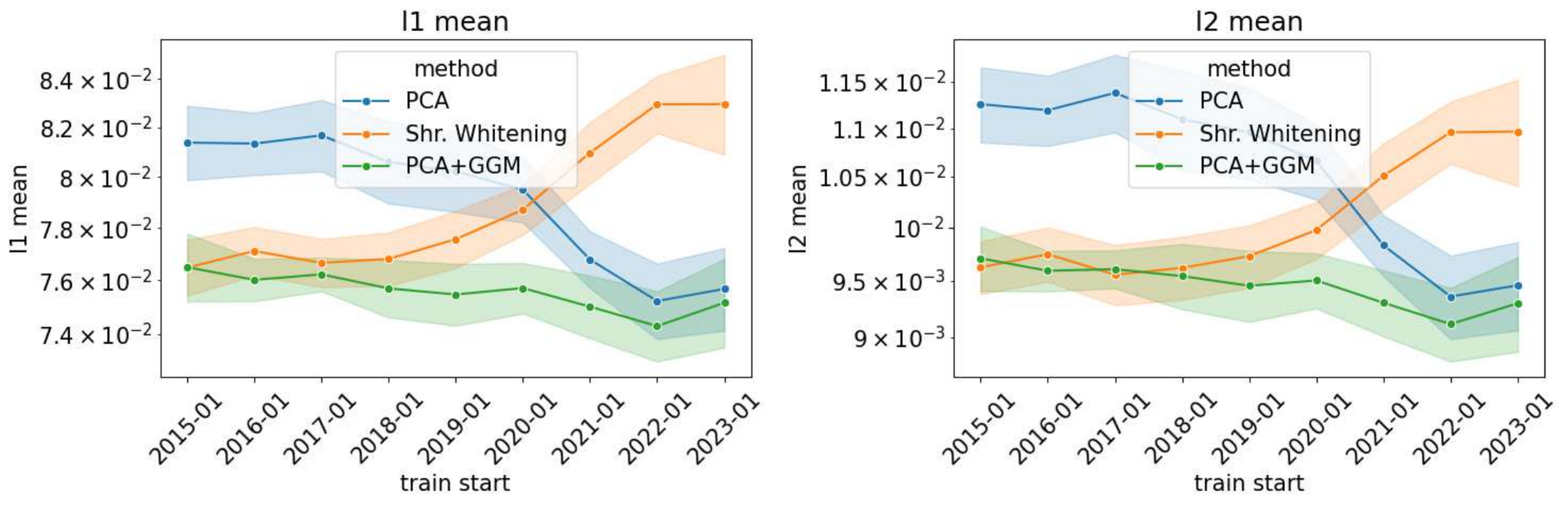}
    \caption{Random Assets (TOPIX500)}
  \end{subfigure}
  
  \caption{The $\ell_1$ and $\ell_2$ mean and standard deviation of the cross correlations as the length of training samples varies for the S\&P~500 and TOPIX~500. The horizontal axis indicates the timestamp at which the training period starts. The left column reports results using all constituents; the right column reports results when 300 constituents are randomly selected. All results are averaged over 10 trials.}
  \label{Fig:Apx-Changing-N-Results}
\end{figure*}

\begin{figure*}[t]
  \centering
  \begin{subfigure}[t]{0.19\textwidth}
    \centering
    \includegraphics[width=\linewidth]{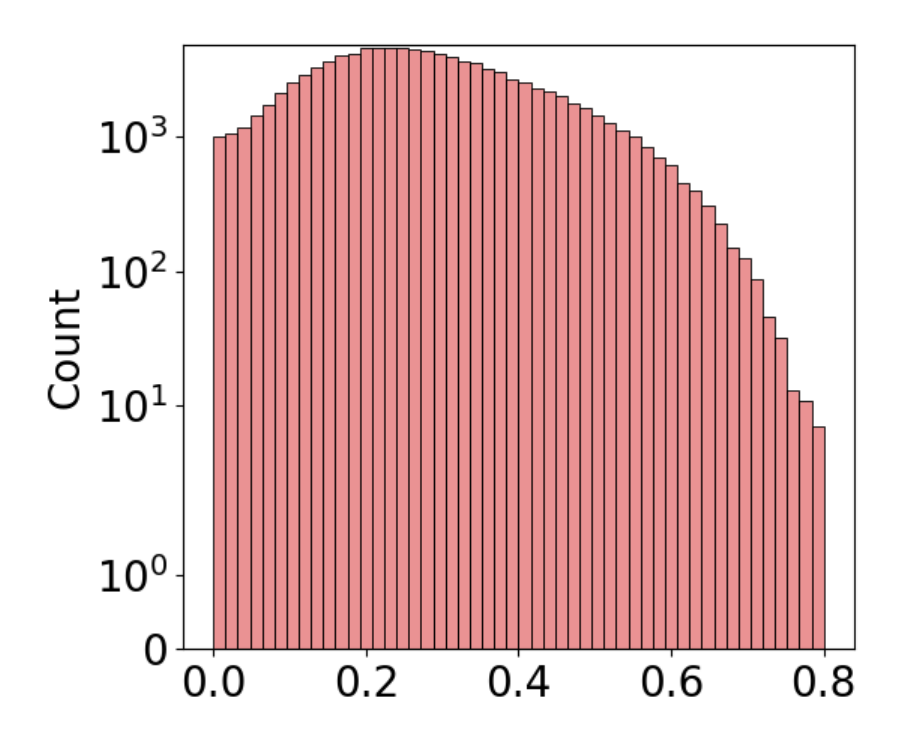}
    \caption{Raw Correlation}
  \end{subfigure}
  \begin{subfigure}[t]{0.19\textwidth}
    \centering
    \includegraphics[width=\linewidth]{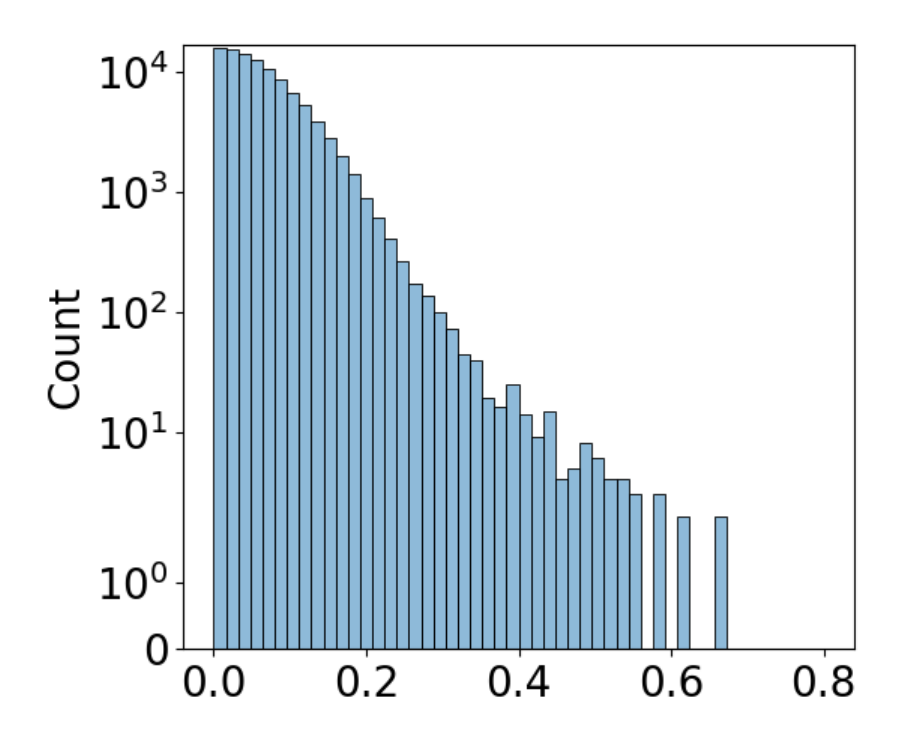}
    \caption{PCA}
  \end{subfigure}
  \begin{subfigure}[t]{0.19\textwidth}
    \centering
    \includegraphics[width=\linewidth]{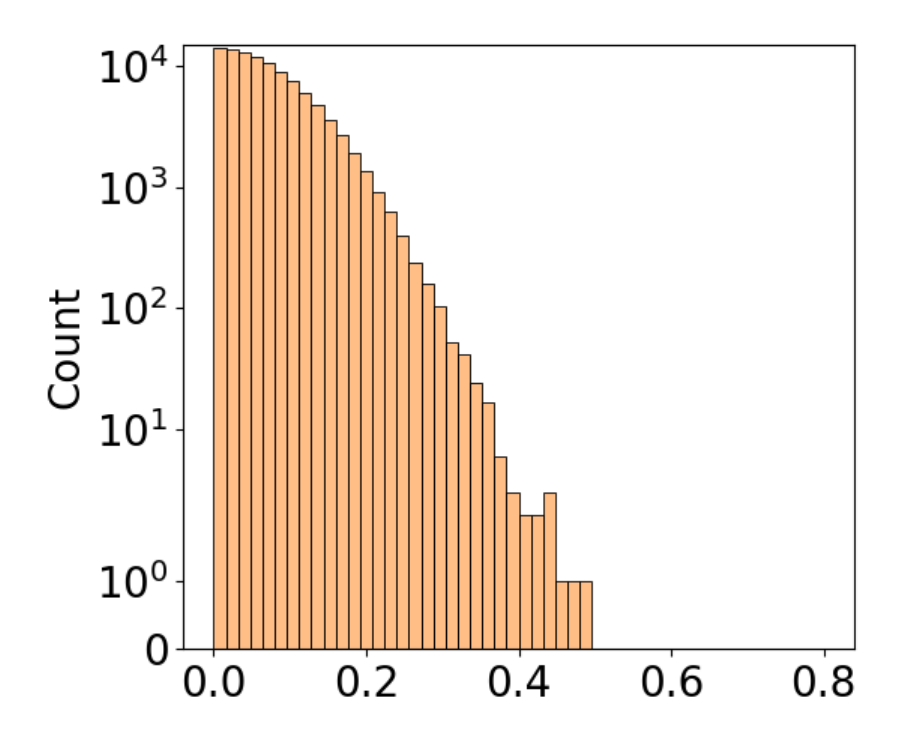}
    \caption{Shr. Whitening}
  \end{subfigure}
  \begin{subfigure}[t]{0.19\textwidth}
    \centering
    \includegraphics[width=\linewidth]{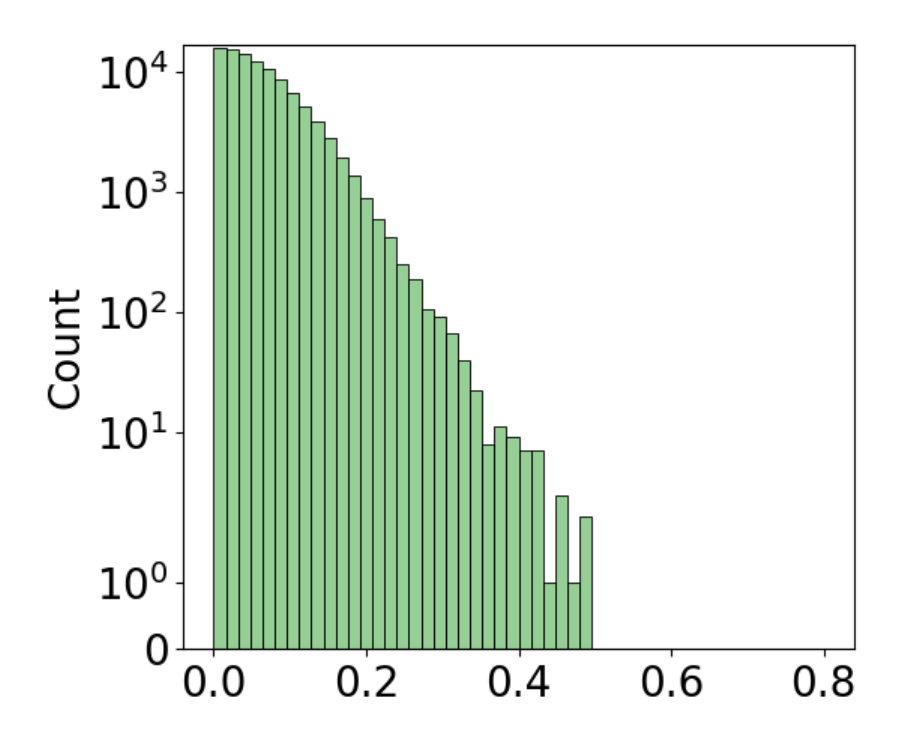}
    \caption{PCA+GGM}
  \end{subfigure}
  \begin{subfigure}[t]{0.19\textwidth}
    \centering
    \includegraphics[width=\linewidth]{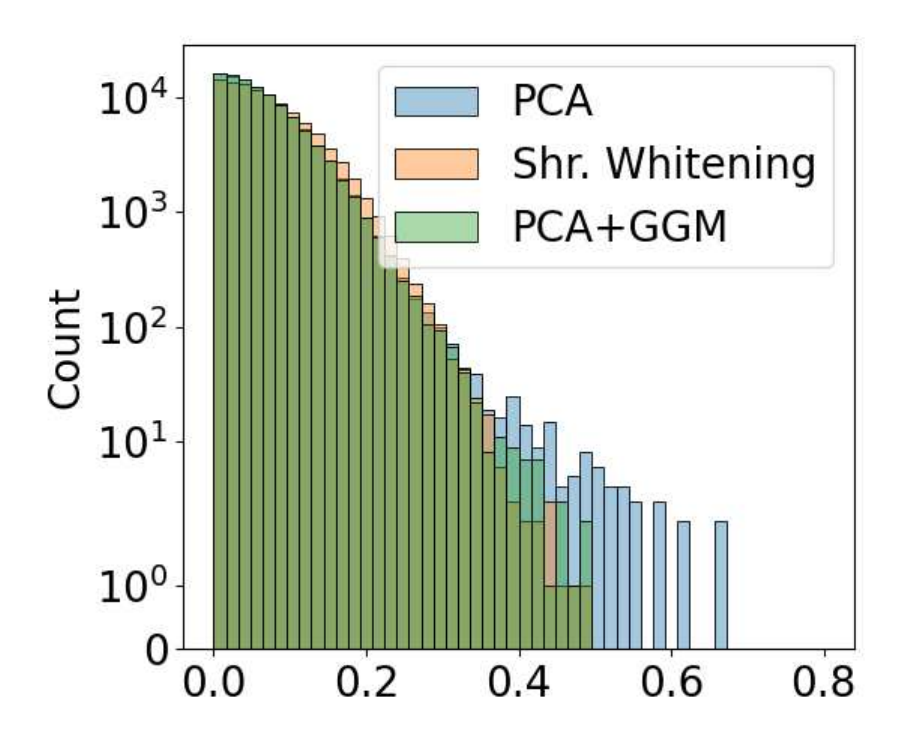}
    \caption{All methods}
  \end{subfigure}
  \caption{Histogram of the absolute values of pairwise correlations among TOPIX~500 constituents, with the training period set to \texttt{2023/01}--\texttt{2023/12} and the validation period to \texttt{2024/01}--\texttt{2024/12}. Note that the $y$-axis is on a logarithmic scale.}
  \label{Fig:Apx-Histgrams}
\end{figure*}

\subsection{Results with Randomly Selected Datasets}
This subsection presents additional results for the experiments in Section \ref{SubSec:Different-Data-Size}.
As in those experiments, we used \texttt{2015/01}--\texttt{2023/12} as the training period and \texttt{2024/01}--\texttt{2024/12} as the test period.
We progressively shortened the training period by one year.
To further assess the effect of the number of periods, we evaluated a setting in which orthogonalization is performed on $300$ randomly selected stocks, thereby increasing the ratio of time periods to the number of stocks.
\par
Figure~\ref{Fig:Apx-Changing-N-Results} presents the experimental results.
The figure shows that, across many evaluation periods, the proposed method achieved lower correlations than PCA and Shr.~Whitening on both metrics— namely the $\ell_1$- and $\ell_2$-means of the correlation matrix.
The residual factors produced by the proposed method exhibited strong orthogonality regardless of the length of the training window.
A similar pattern was observed under random selection as well, where the proposed method achieved superior orthogonality in most cases.
\par
To further examine the details, we present in Figure~\ref{Fig:Apx-Histgrams} the histogram of the absolute correlations of residual factors for the TOPIX~500 constituents. 
The training period is the one year from \texttt{2023/01}--\texttt{2023/12}, and note that the $y$-axis is on a logarithmic scale. 
Figure~\ref{Fig:Apx-Histgrams}~(a) shows that correlations among raw return series are predominantly positive, peaking around $0.2$, whereas for residual-return series the peak shifts toward $0$.
Comparing PCA in Figure~\ref{Fig:Apx-Histgrams}~(b) with the proposed method in Figure~\ref{Fig:Apx-Histgrams}~(d), many absolute correlations above $0.4$ remained under PCA, suggesting the presence of correlations that the information criterion failed to remove. 
Shr.~Whitening in Figure~\ref{Fig:Apx-Histgrams}~(c) does not leave very large absolute correlations.
However, Figure~\ref{Fig:Apx-Histgrams}~(e) indicates that, relative to the proposed method, a larger mass of correlations in the $0.2$--$0.4$ range remains. 
The proposed method in Figure~\ref{Fig:Apx-Histgrams}~(d) succeeds in removing the large absolute correlations observed under PCA, indicating that the GGM-based procedure achieved stronger orthogonalization.

\begin{figure*}[t]
  \centering
  \begin{subfigure}[t]{0.24\textwidth}
    \centering
    \includegraphics[width=\linewidth]{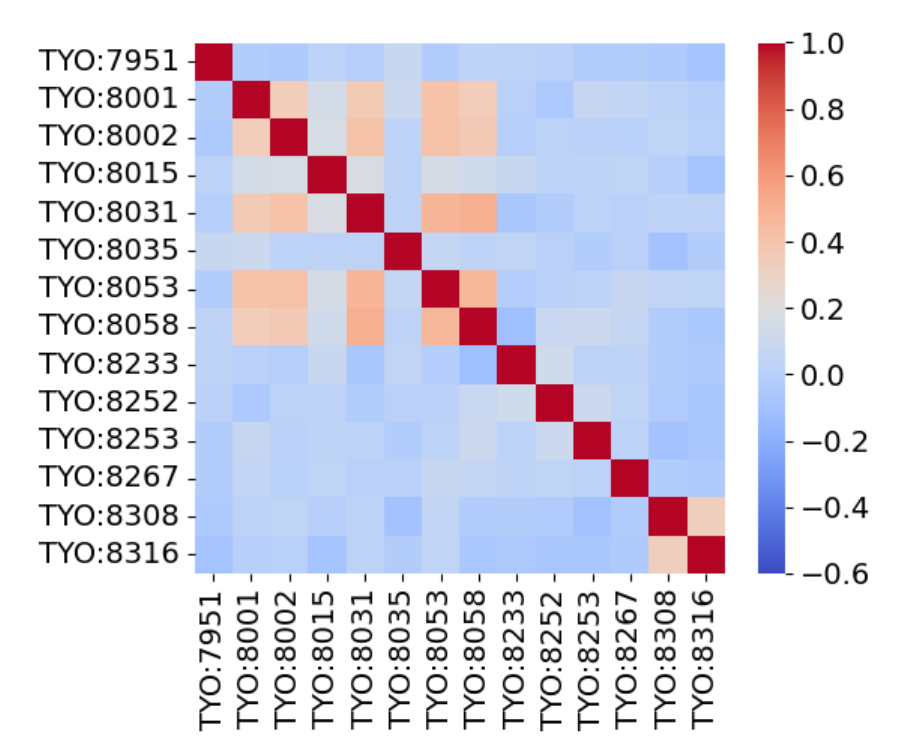}
    \caption{PCA (Train)}
  \end{subfigure}
  \begin{subfigure}[t]{0.24\textwidth}
    \centering
    \includegraphics[width=\linewidth]{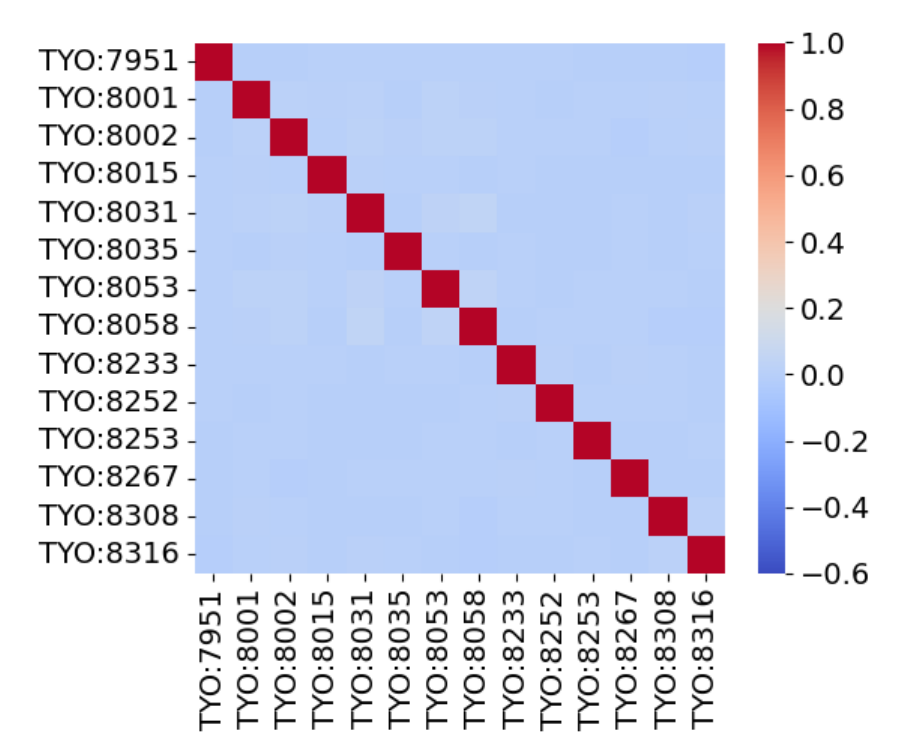}
    \caption{Shr. Whitening (Train)}
  \end{subfigure}
  \begin{subfigure}[t]{0.24\textwidth}
    \centering
    \includegraphics[width=\linewidth]{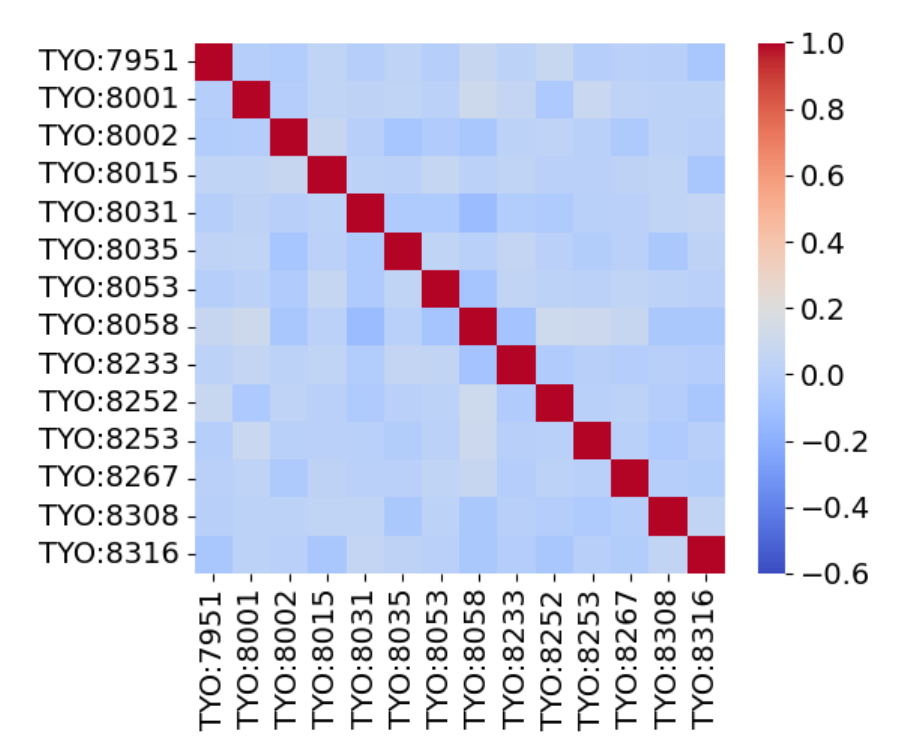}
    \caption{PCA+GGM (Train)}
  \end{subfigure}
  \begin{subfigure}[t]{0.24\textwidth}
    \centering
    \includegraphics[width=\linewidth]{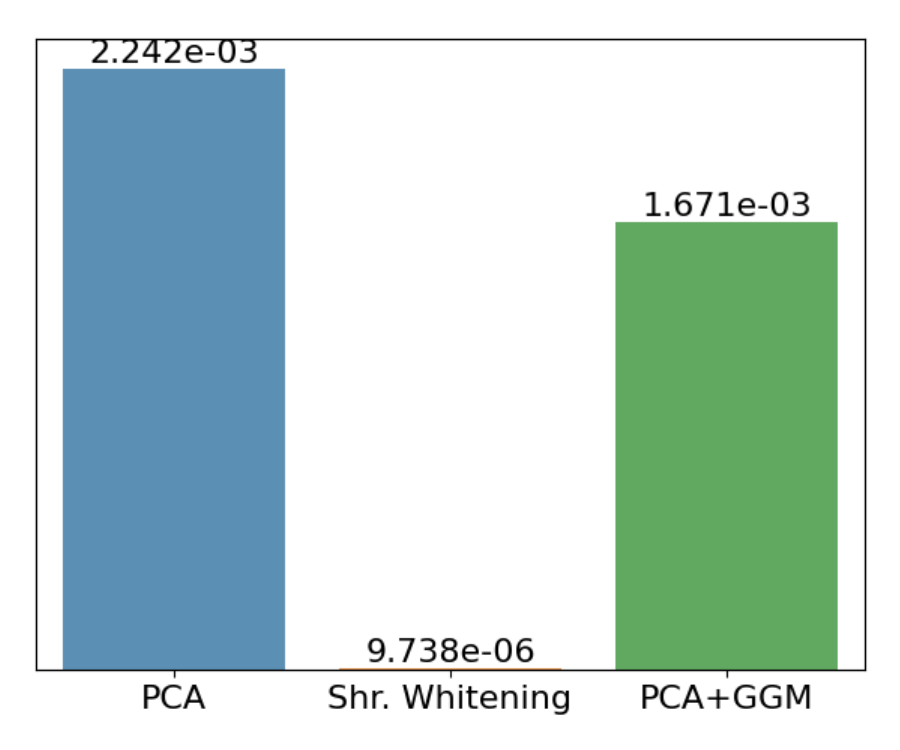}
    \caption{$\ell_2$ mean (Train)}
  \end{subfigure}
  \begin{subfigure}[t]{0.24\textwidth}
    \centering
    \includegraphics[width=\linewidth]{Figures/Cor-Matrix/PCA-Valid.pdf}
    \caption{PCA (Valid)}
  \end{subfigure}
  \begin{subfigure}[t]{0.24\textwidth}
    \centering
    \includegraphics[width=\linewidth]{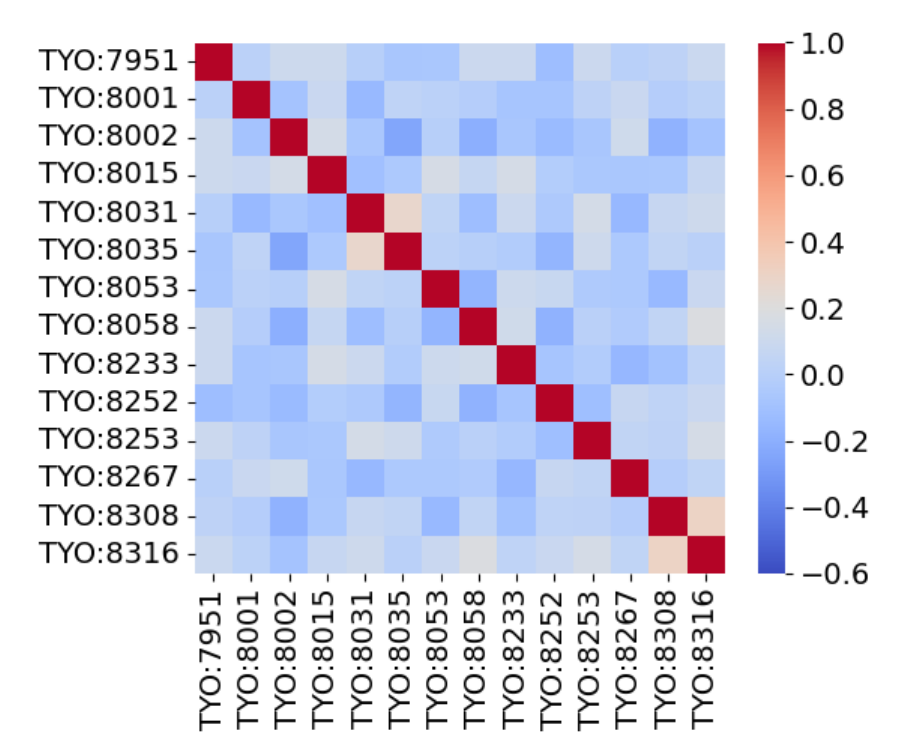}
    \caption{Shr. Whitening (Valid)}
  \end{subfigure}
  \begin{subfigure}[t]{0.24\textwidth}
    \centering
    \includegraphics[width=\linewidth]{Figures/Cor-Matrix/PCA+GGM-Valid.pdf}
    \caption{PCA+GGM (Valid)}
  \end{subfigure}
  \begin{subfigure}[t]{0.24\textwidth}
    \centering
    \includegraphics[width=\linewidth]{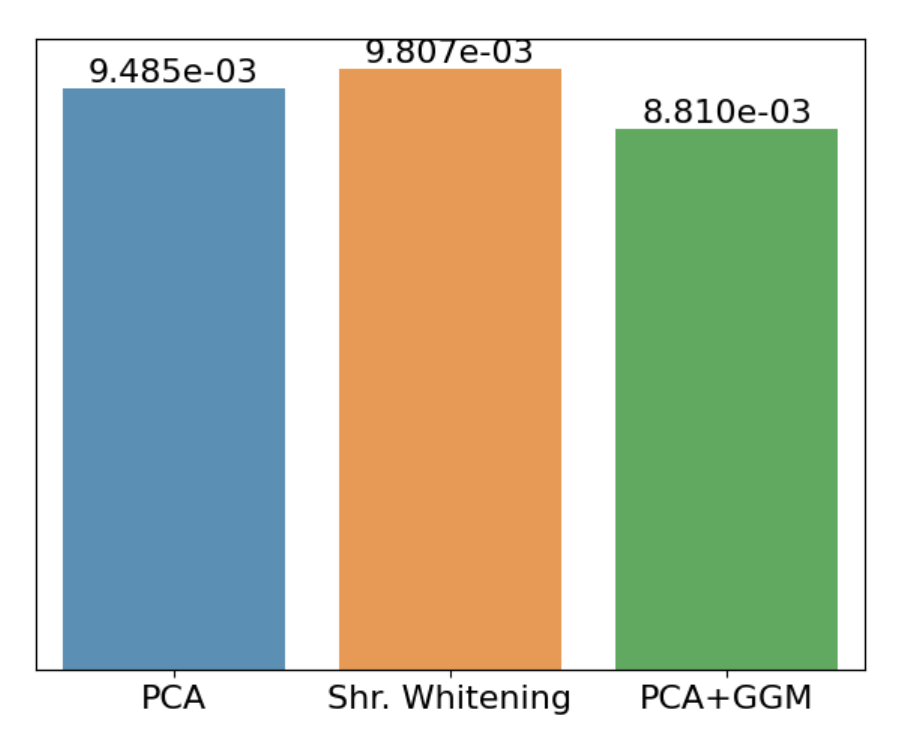}
    \caption{$\ell_2$ mean (Valid)}
  \end{subfigure}
  \caption{(a)–(c): Comparison of the correlation matrices estimated during the training period by each method. (e)–(g): Comparison of the correlation matrices estimated during the validation period by each method. The comparisons are shown for a subset of stocks, and the axes correspond to the respective ticker symbols. (d),(h): $\ell_2$-mean of the correlation matrices in the training and validation periods.}
  \label{Fig:Apx-Cor-Matrix}
\end{figure*}

\subsection{Additional Qualitative Results}
In this subsection, we conduct a qualitative assessment of correlation structures using the return data of TOPIX~ 500 constituents. 
In this experiment, we trained on the return series from \texttt{2020/01} through \texttt{2023/12} and used \texttt{2024/01} to \texttt{2024/12} as the test period. 
As in the previous experiments, we employed Shr.~Whitening and PCA as comparison baselines, and we used the $\ell_2$ mean as the evaluation metric.
\par
Figure~\ref{Fig:Apx-Cor-Matrix} presents a comparison of correlation matrices together with values of the evaluation metrics. 
In addition, Figures~\ref{Fig:Apx-Large-Raw-Cor-Matrix}--\ref{Fig:Apx-Large-PCA+GGM-Cor-Matrix} show larger correlation matrices during the validation period. 
In all cases, for clarity, we display only excerpts of the correlation matrices.
First, in the PCA results shown in Figure~\ref{Fig:Apx-Cor-Matrix}~(a), Figure~\ref{Fig:Apx-Cor-Matrix}~(e), and Figure~\ref{Fig:Apx-Large-PCA-Cor-Matrix}, residual correlations among specific sectors are observed. 
In the shrinkage whitening results shown in Figure~\ref{Fig:Apx-Cor-Matrix}~(b) and Figure~\ref{Fig:Apx-Cor-Matrix}~(f), no within-sector correlations are observed.
However, a comparison with Figure~\ref{Fig:Apx-Cor-Matrix}~(d), Figure~\ref{Fig:Apx-Cor-Matrix}~(h), and and Figure~\ref{Fig:Apx-Large-Shr-Cor-Matrix} indicates that the $\ell_2$ average of correlations is larger in the validation period, suggesting that whitening does not provide appropriate orthogonalization out of sample.
By contrast, the proposed method, shown in Figure~\ref{Fig:Apx-Cor-Matrix}~(c), Figure~\ref{Fig:Apx-Cor-Matrix}~(g), and Figure~\ref{Fig:Apx-Large-PCA+GGM-Cor-Matrix}, removes within-industry correlations while keeping the $\ell_2$ average in the validation period low, thereby confirming the effectiveness of the proposed approach.

\section{Evaluation Metrics}
\label{Apx:Evaluation-Metrics}
In this section, we provide a detailed description of the evaluation metrics used to assess trading performance.
Let $\mathbf{r}^{*}=[r_{1}^{*},r_{2}^{*},\ldots,r_{T^{*}}^{*}]\in\mathbb{R}^{T^{*}}$ denote the vector of returns obtained at each time step during the evaluation period of length $T^{*}$.
In this setting, the return is represented by the mean of the return series as $m^{*}=\frac{1}{T^{*}}\sum_{t=1}^{T^{*}}r^{*}_{t}$, and risk is represented by the standard deviation of the return series as $\sqrt{\frac{1}{T^{*}}\sum_{t=1}^{T^{*}}(r_{t}^{*}-m^{*})^2}$.
The Sharpe ratio is computed according to the following formula, with the risk-free rate denoted by $r_{f}$:
\begin{align}
    \mathrm{SR}=\frac{m^{*}-r_{f}}{s^{*}}.
    \label{Eq:Sharpe-Ratio}
\end{align}
Risk-free rates vary across markets. 
Because positions are closed at each time step, exposure to the risk-free rate is negligible; therefore, we set $r_f = 0$.
We evaluate downside risk using Conditional Value at Risk (CVaR), defined as the average loss in the left tail of the return distribution. 
Let $\mathcal{T}_{-}^{*}$ be the set of time indices corresponding to the worst $\alpha$ fraction of returns for the strategy (i.e., those at or below the empirical $\alpha$-quantile), and let $r_{t}^{*}$ denote the strategy’s return at time $t$. 
We report CVaR as a positive number:
\begin{align}
  \mathrm{CVaR} = \left|\frac{1}{|T_{-}^{*}|} \sum_{t \in \mathcal{T}_{-}^{*}} r_{t}^{*}\right|.
  \label{Eq:CVaR}
\end{align}
A smaller CVaR indicates milder losses in extreme scenarios, and thus a more stable trading strategy.
Maximum drawdown (MDD) is defined as the difference between the maximum and minimum of the cumulative return series and is computed as follows:
\begin{align}
    \mathrm{MDD}=\frac{\max_{t}(\mathbf{c}^{*})-\min_{t}(\mathbf{c}^{*})}{\max_{t}(\mathbf{c}^{*})},
\end{align}
where $\mathbf{c}^{*}\in\mathbb{R}^{T^{*}-1}$ is the cumulative returns of $\mathbf{r}^{*}$.

\begin{figure*}
    \centering
    \includegraphics[width=175mm]{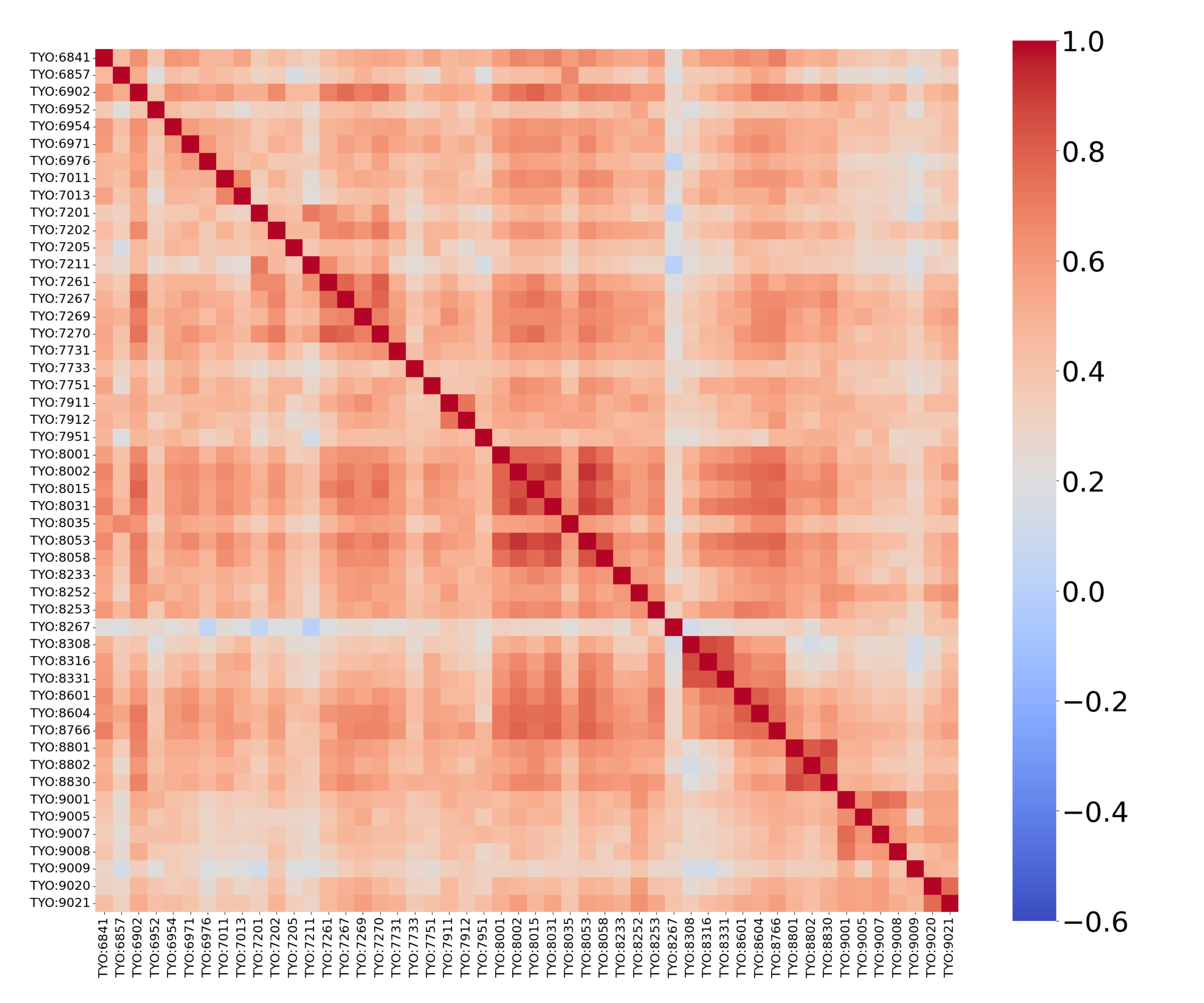}
    \caption{A correlation matrix of \textbf{raw returns} between assets of TOPIX~500 (Training period: \texttt{2020/01}--\texttt{2023/12}, Test period: \texttt{2024/01}--\texttt{2024/12}).}
    \label{Fig:Apx-Large-Raw-Cor-Matrix}
\end{figure*}

\begin{figure*}
    \centering
    \includegraphics[width=175mm]{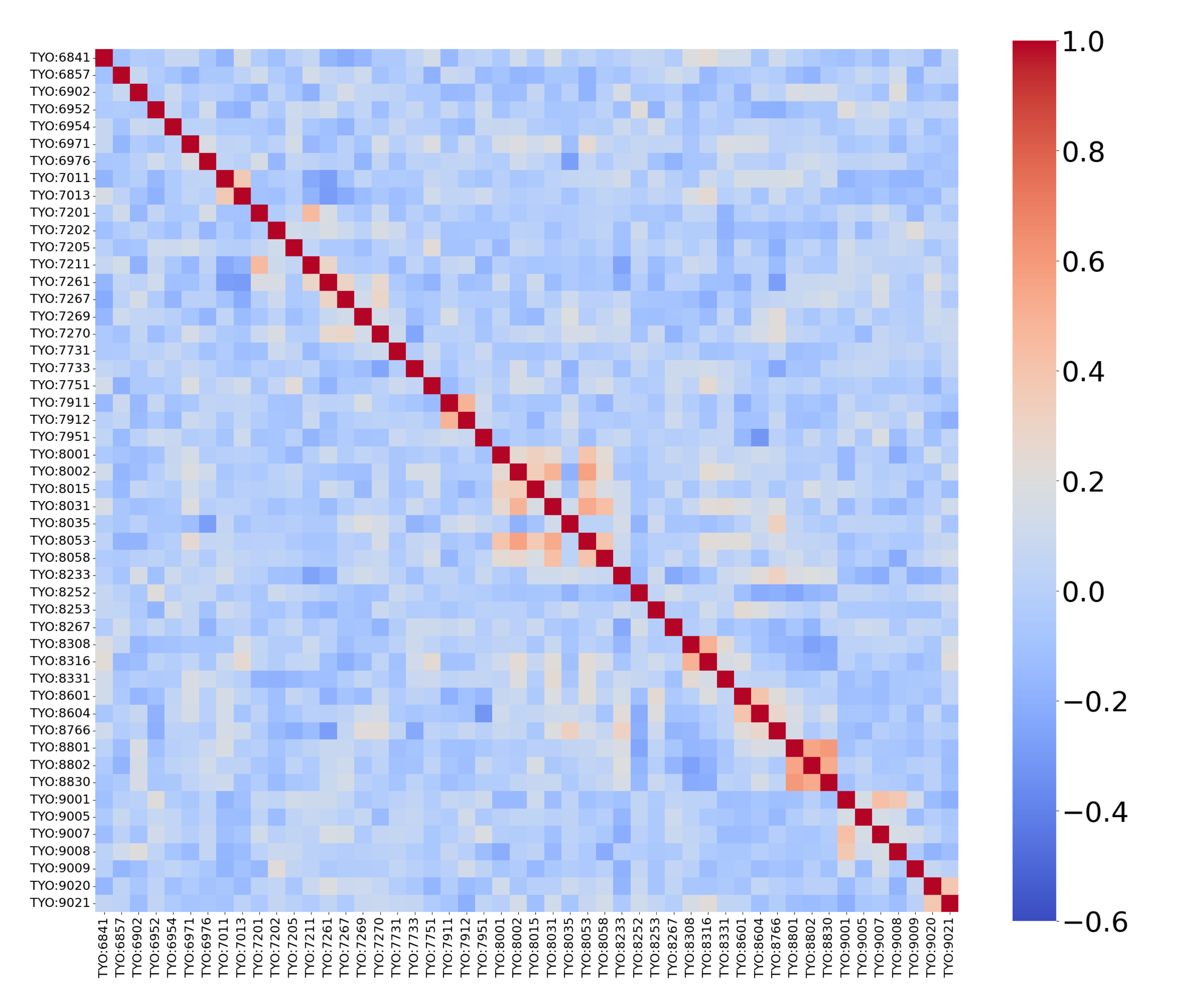}
    \caption{A correlation matrix of \textbf{PCA residual factors} between assets of TOPIX~500 (Training period: \texttt{2020/01}--\texttt{2023/12}, Test period: \texttt{2024/01}--\texttt{2024/12}).}
    \label{Fig:Apx-Large-PCA-Cor-Matrix}
\end{figure*}

\begin{figure*}
    \centering
    \includegraphics[width=175mm]{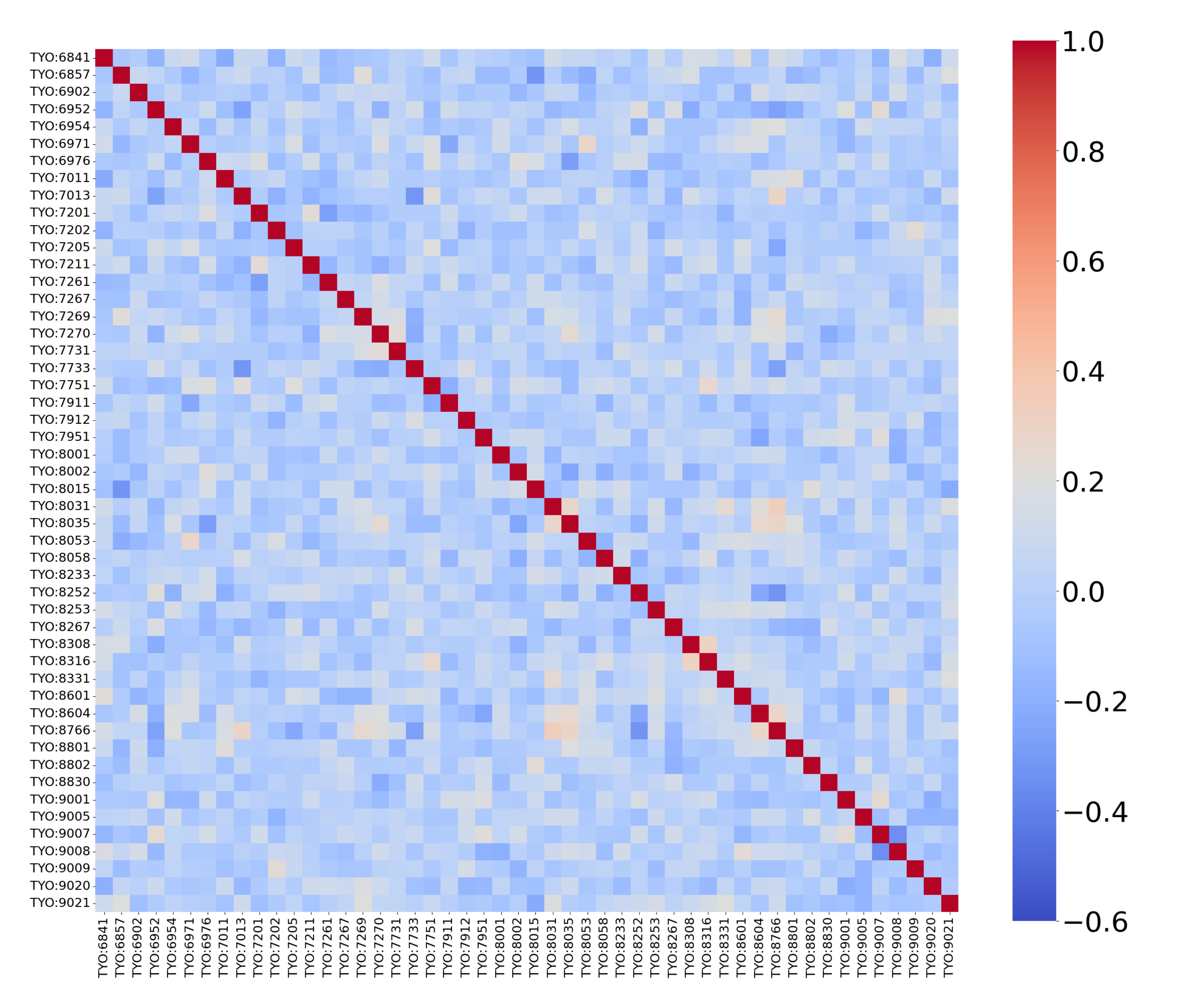}
    \caption{A correlation matrix of \textbf{Shr.~Whitening residual factors} between assets of TOPIX~500 (Training period: \texttt{2020/01}--\texttt{2023/12}, Test period: \texttt{2024/01}--\texttt{2024/12}).}
    \label{Fig:Apx-Large-Shr-Cor-Matrix}
\end{figure*}

\begin{figure*}
    \centering
    \includegraphics[width=175mm]{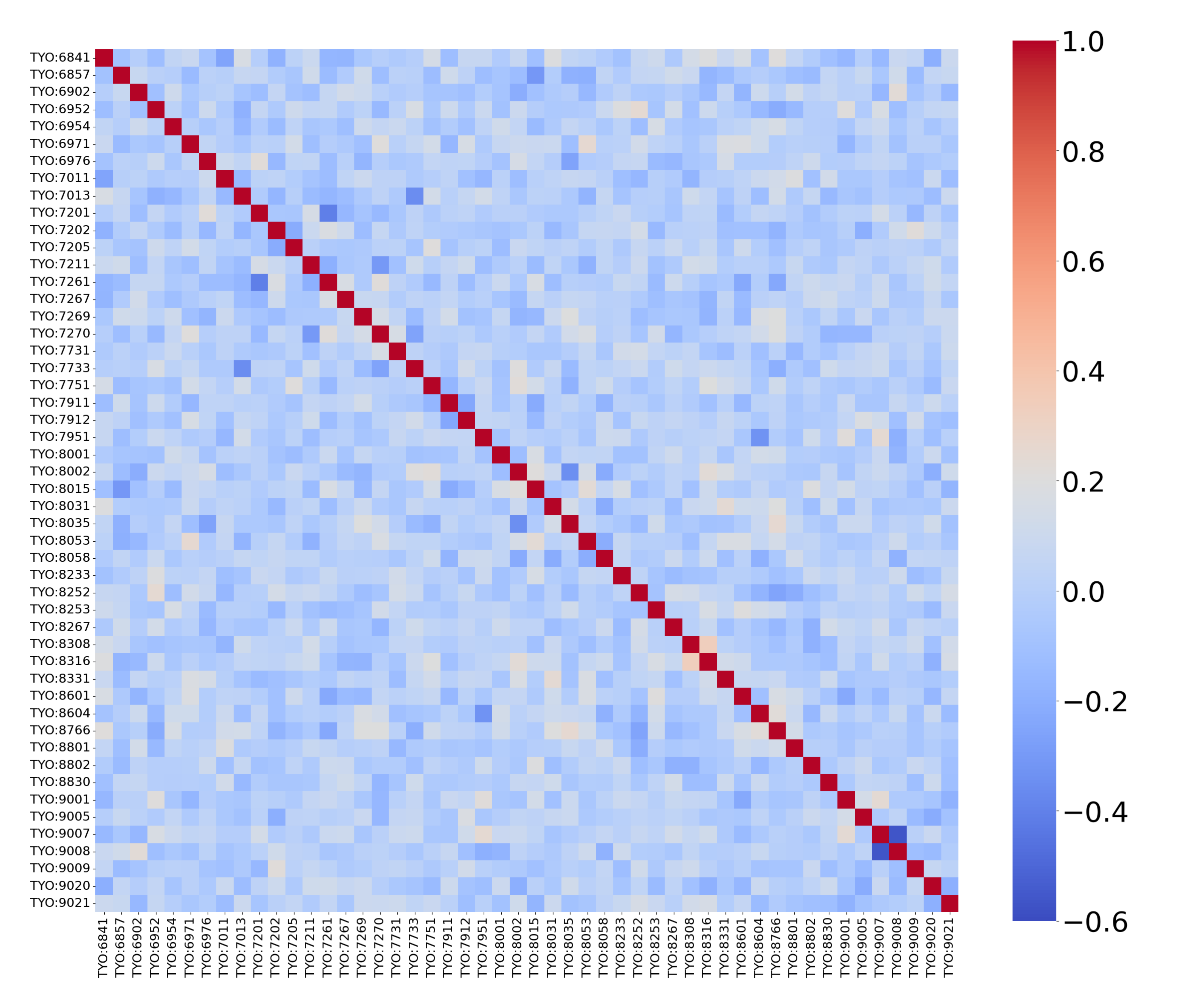}
    \caption{A correlation matrix of \textbf{PCA+GGM residual factors} between assets of TOPIX~500 (Training period: \texttt{2020/01}--\texttt{2023/12}, Test period: \texttt{2024/01}--\texttt{2024/12}).}
    \label{Fig:Apx-Large-PCA+GGM-Cor-Matrix}
\end{figure*}

\end{document}